\newif\iflong
\newif\ifshort
\renewcommand{\@Opargbegintheorem}[4]{   #4\trivlist\item[\hskip\labelsep{#3#2\@thmcounterend}]}
\newcommand\blfootnote[1]{   \begingroup
  \renewcommand\thefootnote{}\footnote{#1}   \addtocounter{footnote}{-1}   \endgroup
}
\newcommand{\past}{\emph{past}}
\newcommand{\future}{\emph{future}}
\newcommand{\present}{\emph{present}}
\newcommand{\act}{\emph{act}}
\newcommand{\cyc}{\emph{cyc}}
\newcommand{\vis}{\emph{vis}}
\newcommand{\cc}[1]{{\mbox{\textnormal{\textsf{#1}}}}\xspace} 
\newcommand{\NP}{\cc{NP}}
\newcommand{\FPT}{\cc{FPT}}
\newcommand{\XP}{\cc{XP}}
\newcommand{\Wh}{\cc{W[1]}-\textup{hard}\xspace}
\newcommand{\Whness}{\cc{W[1]}-\textup{hardness}\xspace}
\newcommand{\N}{\mathbb{N}}
\newcommand{\III}{\mathbb{I}}
\newcommand{\bigoh}{\mathcal{O}}
\newcommand{\tw}{\textnormal{tw}}
\newcommand{\UF}{\textsc{Unsplittable Flow}\xspace}
\newcommand{\UBP}{\textsc{Unary Bin Packing}\xspace}
\newcommand{\MC}{\textsc{Multi-Colored Clique}\xspace}
\title{The Parameterized Complexity Landscape of \\ the Unsplittable Flow Problem}
\titlerunning{Unsplittable Flow Problem Parameterized}
\author{Robert Ganian \and Mathis Rocton \and Daniel Unterberger}
\institute{Algorithms and Complexity Group, TU Wien, Vienna, Austria}
\Crefname{splemma}{Lemma}{Lemmas}
\Crefname{sptheorem}{Theorem}{Theorems}
\Crefname{spdefinition}{Definition}{Definitions}
\Crefname{spproperty}{Property}{Properties}
\Crefname{spcorollary}{Corollary}{Corollaries}
\begin{document}

\definecolor{LightGray}{gray}{0.9}
\definecolor{LightCyan}{rgb}{0.88,0.9,1}
\definecolor{Ours}{rgb}{0.25,0.9,0.5}
\definecolor{Known}{rgb}{1,1,0.25}
\definecolor{Par}{rgb}{0.1,0.5,1}
\definecolor{Unb}{rgb}{1,0.5,0.1}
\newcommand{\pa}[1][{\textsc{Par.}}]{\cellcolor{Par}\colorbox{Par}{\makecell{#1}}}
\newcommand{\un}{\cellcolor{Unb} \textsc{Unb.}}
\newcommand{\spacingtab}{~~~~}

\maketitle

\begin{abstract}
We study the well-established problem of finding an optimal routing of unsplittable flows in a graph. While by now there is an extensive body of work targeting the problem on graph classes such as paths and trees, we aim at using the parameterized paradigm to identify its boundaries of tractability on general graphs. We develop novel algorithms and lower bounds which result in a full classification of the parameterized complexity of the problem with respect to natural structural parameterizations for the problem---notably maximum capacity, treewidth, maximum degree, and maximum flow length. 
In particular, we obtain a fixed-parameter algorithm for the problem when parameterized by all four of these parameters, establish \XP-tractability as well as \Whness\ with respect to the former three and latter three parameters, and all remaining cases remain para\NP-hard.
\end{abstract}

\section{Introduction}\label{sec:intro}

 In the \UF problem, we are given an undirected edge-weighted graph $G$ and a set $T$ of \emph{tasks}, where each task consists of a pair of vertex endpoints and two non-negative integers: a \emph{demand} and a \emph{profit}. The goal is to select a subset $T'$ of the tasks and construct a set $P$ of paths such that:
\blfootnote{The authors acknowledge support from the Austrian Science Foundation (FWF, Project 10.55776/Y1329) and the Vienna Science and Technology Fund (WWTF, Project 10.47379/ICT22029). M.\ Rocton also acknowledges support from the \includegraphics[width=0.4cm]{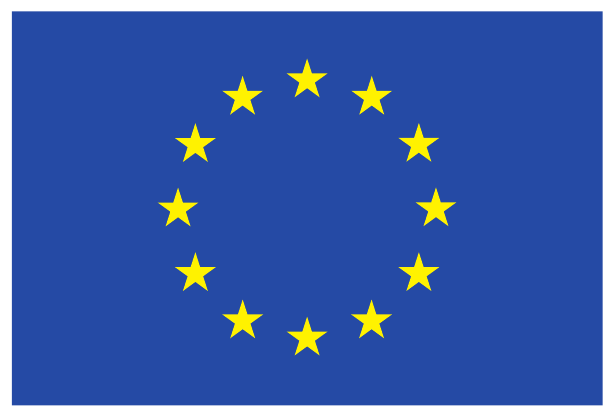} European Union’s Horizon 2020 COFUND programme (LogiCS@TUWien,
No. 101034440).}
\begin{itemize}[topsep=3pt]
\item each task in $T'$ is assigned a path in $P$ which connects its two endpoints;
\item for each edge $e$ in $G$, the total demand of tasks routed via $P$ through $e$ does not exceed the weight of $e$ (i.e., its \emph{capacity}); and
\item among all subsets of tasks and sets of paths satisfying the above two conditions, the sum of the profits of tasks in $T'$ is maximized.
\end{itemize}

\UF has been extensively studied in the literature, and has applications in areas such as resource allocation and scheduling~\cite{PhillipsUW00,CalinescuCKR11}, multi-commodity routing~\cite{ChekuriMS07} and caching~\cite{ChrobakWMX12}. It is also known to be notoriously intractable. First of all, \UF is easily seen to be weakly \NP-hard even when $G$ is restricted to be a $K_2$ via a direct reduction from \textsc{Knapsack}---in other words, if we allow the edge capacities to be encoded in binary, then we cannot hope to even obtain an efficient exact algorithm for a single edge. Naturally, in many cases of interest one need not deal with exceedingly large edge capacities, and so in this article we focus solely on the setting where the capacities are encoded in \textbf{unary}. While the case of $G=K_2$ is trivial in the unary setting, the problem still remains far from tractable: it is known to be strongly \NP-hard on paths even if all tasks have a demand of at most $3$ and all edges have the same capacity~\cite{BonsmaSW11}.

Many of the more recent theoretical works targeting \UF restrict their attention to specific graph classes such as paths~\cite{BonsmaSW14,BatraG0MW15,GrandoniMW017,Wiese17,Anagnostopoulos18a,0001MW21,GrandoniMW22,0001MW22} or trees~\cite{ChekuriEK09,AdamaszekCEW18,Martinez-MunozW21}, and typically deal with the inherent intractability of the problem by aiming for approximation algorithms, parameterized algorithms, or a combination thereof on these graph classes. In this article, we ask a different question: under which conditions can we circumvent the classical intractability of finding an optimal unsplittable flow on \textbf{general graphs}? In particular, under which structural parameterizations of the input can we obtain exact fixed-parameter or at least \XP\ algorithms for \UF?

\smallskip
\noindent \textbf{Contribution.}\quad
Since \UF generalizes the classical \textsc{Edge Disjoint Paths} problem\footnote{\textsc{Edge Disjoint Paths} is equivalent to seeking a routing of all tasks in \UF under the restriction that all demands, capacities and profits are~$1$.}, a natural starting point for our investigation would be to consider the combined parameter of \emph{treewidth} \tw~\cite{RobertsonS86} plus \emph{maximum degree} $\Delta$---indeed, \textsc{Edge Disjoint Paths} is known to be fixed-parameter tractable under the combined parameterization of $\tw+\Delta$~\cite{GanianOR21,GanianO21}. Unfortunately, the aforementioned \NP-hardness of the problem on paths~\cite{BonsmaSW11} immediately rules out even \XP\ algorithms under this parameterization. In combination with the known \NP-hardness of \textsc{Edge Disjoint Paths} on the class $K_{3,n}$ of complete bipartite graphs~\cite{FleszarMS18}, one can rule out tractability under all known structural parameterizations of the input graph alone, including \emph{treedepth}~\cite{sparsity}, the \emph{vertex cover number}~\cite{KorhonenP15,BhoreGMN20,BodlaenderGP23}, \emph{treecut-width}~\cite{MarxWollan14,Wollan15,GanianKS22}, and the \emph{feedback edge number}~\cite{ChaplickGFGRS22,KoanaFN23,BGR24}.

However, even in the considered setting of unary-encoded capacities, the \NP-hardness reduction on paths inherently requires the edge capacities to be sufficiently large (in contrast to the demands, which are small constants). As our first result, we show that if one also parameterizes by the maximum edge capacity in the graph, \UF becomes \XP-tractable:

\begin{restatable}{theorem}{XPctwd}
\label{thm:XPctwd}
\UF\ is in \XP\ parameterized by the maximum capacity $c$, the treewidth $\tw$ and the maximum degree $\Delta$ of the input graph.
\end{restatable}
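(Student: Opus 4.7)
The plan is to solve \UF by dynamic programming along a nice tree decomposition of $G$ of width $\tw$ equipped with introduce-edge nodes, which can be computed in time $f(\tw)\cdot n$ by standard algorithms. The key structural observation is that at any node $t$ of the decomposition with bag $X_t$, the number of \emph{open fragment endpoints} at $X_t$---i.e., endpoints of maximal partial subpaths of selected tasks that continue outside the processed subgraph $G_t$---is bounded by $c \cdot \Delta \cdot (\tw+1)$. Indeed, each such endpoint must use at least one unit of capacity on an edge leaving a vertex of $X_t$ into the unprocessed part of the graph; there are at most $\Delta \cdot (\tw+1)$ such potential edges, each carrying capacity at most $c$, giving the bound.

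We encode each DP state as a \emph{fragment signature} consisting of (i) a multiset of partial fragments each having at least one open end at $X_t$; (ii) a description of each fragment by its two endpoints, where each endpoint is either a vertex in $X_t$ (an open end) or a ``task-endpoint'' marker signifying that the fragment closes off at the task's source or sink located inside $V(G_t)$; and (iii) a task label drawn from $T$ attached to each fragment. Since a signature contains at most $O(c \cdot \Delta \cdot \tw)$ fragments, and each carries a label from at most $|T|$ options, the number of signatures per node is bounded by $|T|^{O(c \cdot \Delta \cdot \tw)} \cdot g(c, \tw, \Delta)$ for some computable function $g$---polynomial in the input size with exponent depending only on $c$, $\tw$, and $\Delta$. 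For each signature we store the maximum accumulated profit, where the profit of a task is collected once all of its fragments have been fully closed off inside $G_t$.

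Transitions follow the standard scheme for nice tree decompositions: an introduce-vertex node may initialize a task-endpoint-closed fragment if the new vertex is a source or sink of some task; an introduce-edge node enumerates all capacity-respecting ways of routing fragments through the new edge, potentially merging two open endpoints into an internal edge or completing a task path; a forget-vertex node requires that no open fragment endpoints remain at the forgotten vertex; and a join node combines two child signatures by gluing fragments that share open endpoints in $X_t$, with task labels preventing the accidental merging of pieces from distinct tasks. The main technical obstacle is the join transition, since a single parent fragment may be assembled from pieces contributed by both subtrees and consistency of all pairings, task labels, and endpoint positions must be verified simultaneously; a careful case analysis and bounded enumeration over the child signatures suffices to show that all valid gluings can be computed in time polynomial in the signature count. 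At the root of the decomposition (with empty bag) the unique valid signature is the empty one, and the corresponding table entry yields the desired optimal profit in overall time $(n + |T|)^{f(c, \tw, \Delta)}$, as required.
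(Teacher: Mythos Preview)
Your high-level plan is sound and shares the paper's key counting observation: the number of task-crossings through the boundary of a bag is at most $c\cdot\Delta\cdot(\tw+1)$. However, there are two genuine gaps.

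First, you never track the length of each partial path, so your DP cannot enforce the length bound $\ell$ that is part of every \UF\ instance. The paper explicitly stores, for every active task, the total number of edges already used (the component $\Theta$ of its records); this is needed for the theorem as stated and only costs an extra $\ell^{O(c\Delta\tw)}$ factor, so it is easily repaired.

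Second, and more seriously, your fragment signatures are too coarse to make the join transition correct. Recording only the two endpoints of each ``maximal partial subpath'' loses the information of which bag vertices a fragment visits \emph{internally}. At a join node, a fragment of task $z$ in child~$1$ may run $a\to v\to b$ with $a,b\in X_t$ endpoints and $v\in X_t$ internal, while child~$2$ independently contributes a fragment $v\to w$ for the same task; the combined routing has degree~$3$ at $v$ and is not a path, yet from the signatures you see only $\{(a,b),(v,w)\}$, which looks consistent. The profit you attach to this combined signature can therefore exceed what any valid partial solution with that signature achieves. A related problem is profit double-counting: a task whose source and sink both lie in $X_t$ can be ``fully closed off'' independently in \emph{both} subtrees, and since you do not track closed tasks in the signature, both children collect its profit. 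The paper sidesteps all of this by a rather different design: it records only \emph{which tasks use which boundary edges} (no pairing into fragments at all), accounts for profit via a half-per-endpoint scheme so that endpoints still in the bag contribute nothing, and neutralizes cycles that can arise at a join not by detecting them but by a dominance (``superseding'') argument showing that any record representing a cycle is beaten by a strictly shorter record without it. Your more explicit fragment-tracking route can be salvaged---e.g.\ by additionally storing, for every active task, its degree at every bag vertex, and by switching to endpoint-based profit accounting---but as written the join step is not correct, and ``a careful case analysis'' does not cover these issues.
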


The proof of the above theorem relies on dynamic programming, where we utilize records that store information about how the selected tasks are routed through the given separator (i.e., bag in the decomposition), whereas each task may of course also visit the bag multiple times. Interestingly, we use trimmed-down records which suppress seemingly crucial information about such routings---in particular the order in which each selected task visits the bag---to achieve a running time with only single-exponential dependency on each of the parameters.

Next, we show that Theorem~\ref{thm:XPctwd} is essentially tight. On one hand, we already noted that one cannot drop $c$ from the parameterization, and similarly one can notice that dropping $\tw$ or $\Delta$ is ruled out by the \NP-hardness of \textsc{Edge Disjoint Paths} on grids~\cite{Marx04} and $K_{3,n}$~\cite{FleszarMS18}. Naturally, the above still leaves open the existence of a fixed-parameter algorithm under the same parameterization. As our second result, we provide a non-trivial reduction from \textsc{Multicolored Clique} which excludes such an algorithm under standard complexity assumptions. Surprisingly, our reduction even rules out fixed-parameter tractability when restricted to the well-studied class of paths.

\begin{restatable}{theorem}{Wctwd}
\label{thm:Wctwd}
\UF\ is \Wh parameterized by the maximum capacity $c$, even when restricted to paths.
\end{restatable}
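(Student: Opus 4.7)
The plan is to reduce from \MC. Given an \MC instance $(H, k)$ with vertex set $V(H) = V_1 \sqcup \cdots \sqcup V_k$ (each class of size~$n$), I would construct a \UF instance on a path $P$ whose maximum capacity $c$ is bounded by a function of~$k$ alone, such that $P$ admits a routing of profit at least a threshold $P^*$ if and only if $H$ contains a multicolored $k$-clique.

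I would partition $P$ into $k$ vertex-selection regions (one per color) and $\binom{k}{2}$ edge-verification regions (one per pair of colors). For each color~$i$ and each $v \in V_i$, I would place a vertex-indicator task $\alpha_v$ inside the color-$i$ region at a position uniquely identifying~$v$. A selector edge at the start of that region would, together with the profit structure, force exactly one $\alpha_v$ per color to be selected. Symmetrically, for each edge $uv \in E(V_i, V_j)$ an edge-indicator task $\beta_{uv}$ would be placed inside the pair-$(i,j)$ region, with a pair-specific selector edge forcing exactly one such $\beta_{uv}$ per pair.

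To enforce that the chosen $\beta_{uv}$ matches the chosen $\alpha_u$ and $\alpha_v$, I would add linking tasks spanning from the vertex slots in the color-$i$ region into the pair-$(i,j)$ region. Demands would be calibrated so that the capacity of each verification edge is exactly saturated when the selected $\beta_{uv}$ is consistent with the selected $\alpha_u$ and $\alpha_v$, and strictly exceeded otherwise. Correctness would then follow by a standard two-direction argument: a multicolored clique yields a feasible packing of profit $P^*$, and any feasible packing of profit $P^*$ must make consistent vertex- and edge-selections that trace out a clique.

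The main obstacle is keeping the maximum capacity bounded by a function of~$k$ alone. A naive encoding of vertex identities as demand values on a single selector edge would force $c = \Omega(n)$, since distinguishing $n$ vertices through a single linear capacity constraint requires $\Omega(n)$ units. To sidestep this, I plan to encode each vertex's identity \emph{positionally}, through the edge of $P$ its interval occupies, rather than \emph{numerically}, through a demand value, and to route the linking tasks so that every edge of~$P$ is crossed by only $\mathrm{poly}(k)$ simultaneously-selectable tasks independent of~$n$. Designing the slot geometry and the linking-task demands so that no inconsistent selection can game the capacity budget will be the technically delicate part.
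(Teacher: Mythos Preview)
Your high-level plan is reasonable and the obstacle you identify (keeping $c$ bounded in $k$) is indeed the heart of the matter, but the proposal stops precisely where the real work begins: you defer the design of the consistency gadget (``the technically delicate part'') without giving any concrete mechanism. Without a specified slot geometry and demand scheme for the linking tasks, there is no proof yet---the entire difficulty lies in making a positional encoding actually detect mismatches while keeping every edge's load bounded by $f(k)$. Your layout with $\binom{k}{2}$ separate edge-verification regions forces linking tasks to cross many intervening regions; you would need to argue that (i) selected linking tasks from different color/pair combinations never pile up beyond $f(k)$ on any edge, and (ii) inside each pair-$(i,j)$ region the interaction between $\beta_{uv}$ and the two incoming linking tasks is tight enough that an inconsistent choice necessarily overflows some edge. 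Neither is obvious, and the standard \textsc{MCC} templates that achieve this typically rely on numeric encodings of vertex identities---exactly what you have (correctly) ruled out.

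The paper takes a quite different and more compact route that sidesteps linking gadgets entirely. There are no separate edge-verification regions: the path consists of $k$ control vertices, then two consecutive vertices $v_{js},v_{jt}$ for each original vertex $v_j$ (laid out in color order), then $k$ more control vertices. Edge tasks for $v_iv_j\in E$ go directly from $v_{it}$ to $v_{js}$ with demand~$1$. The key device is to set each task's profit equal to its demand times its length, and the target $\tau$ to the sum of all edge capacities; a solution of profit $\tau$ must then saturate every edge exactly. Carefully chosen demands---$\max(0,k-2i+1)$ for tasks from the left control block into color~$i$, $\max(k-i,i-1)$ for the single-edge task on $v_{is}v_{it}$, and $\max(0,2i-k-1)$ for tasks into the right control block---make the arithmetic force one vertex per color and one edge task between each pair of selected vertices. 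All capacities are at most $(k/2)^2$. This ``profit $=$ demand $\times$ length with full-saturation target'' trick is the idea your proposal is missing; it replaces the whole linking-task machinery by a single global accounting constraint.
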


While the above provides a seemingly complete complexity-theoretic picture of \UF\ under the considered parameterizations (as well as other structural graph parameters; see also the discussion at the beginning of Section~\ref{sec:bounded_l}), the situation is still somewhat unsatisfactory: is there no hope for obtaining a fixed-parameter algorithm by exploiting the structural properties of inputs? In the second part of our article, we identify a restriction on the problem---notably, the maximum length $\ell$ of the paths that can be used to route the tasks---which turns out to yield fixed-parameter tractability of the problem. We remark that such a restriction is far from unnatural: unsplittable flows with a prescribed bound on their length have already been proposed and studied in the literature~\cite{MartensS05,KolmanS06}, and avoiding long routes is critical in several application settings 
 (consider, e.g., the routing of perishable goods or the allocation of bandwidth in a network with communication time/delay constraints).

As our third result, we establish fixed-parameter tractability for \UF\ when additionally parameterized by $\ell$:

\begin{restatable}{theorem}{fptctwdell}
\label{thm:fptctwdell}
\UF\ is fixed-parameter tractable parameterized by the maximum capacity $c$, the treewidth $\tw$, the maximum degree $\Delta$ of the input graph and the maximum length $\ell$ of any admissible flow route.
\end{restatable}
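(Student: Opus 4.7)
The plan is to extend the dynamic programming scheme of Theorem~\ref{thm:XPctwd} over a nice tree decomposition of $G$, enriching the records to retain full information about how the partial routes of selected tasks traverse each bag. Bounding the path length $\ell$ is precisely what turns these richer records, which would blow up in the XP algorithm, into FPT-sized ones.

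Two combinatorial observations underpin the bound on the number of records. First, at any bag $B$ the total flow crossing the boundary between the subtree rooted at $B$ and the rest of $G$ is at most $c\cdot(\tw+1)\cdot\Delta$, since $B$ has at most $(\tw+1)\Delta$ incident edges, each of capacity at most $c$. Second, since each task has a route of length at most $\ell$, any single task crosses $B$ at most $\ell$ times and therefore contributes at most $\ell$ maximal ``fragments'' to the portion of its route lying within the subtree. Together these cap the number of tasks with partially-processed routes (``active'' at $B$) by $c(\tw+1)\Delta$, each carrying at most $\ell$ fragments. A record at $B$ would accordingly store, for each edge incident to $B$, the residual capacity consumed so far, and for each active task: which of its two endpoints (if any) has already been consumed inside the subtree, together with the linearly ordered sequence of its fragments, each described by its length and by its pair of anchor vertices in $B$ (or by a ``consumed endpoint'' marker). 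Because each fragment descriptor is drawn from a set of size bounded in $\tw+\ell$, and the record holds at most $c(\tw+1)\Delta\cdot\ell$ such descriptors, the number of records per bag is bounded by some function $f(c,\tw,\Delta,\ell)$.

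The standard nice-decomposition transitions (introduce/forget vertex, introduce edge) update these records locally in the expected way, and a task's profit is collected at the forget node where its fragment sequence collapses into a single fragment whose anchors match its two endpoints. The main obstacle I anticipate is the join transition: when merging two child records sharing the bag $B$, for each active task the two children's fragment sequences must be interleaved into a single linear order consistent with both anchor placements along the task's end-to-end path, while respecting capacity constraints on the incident edges and keeping the combined length within $\ell$. Since the number of fragments per task is at most $\ell$, the number of admissible interleavings is bounded by a function of $\ell$, and each merge can be validated and written down in time bounded by $f(c,\tw,\Delta,\ell)$ per pair of records. Summing over all nodes of the decomposition yields the claimed fixed-parameter running time $f(c,\tw,\Delta,\ell)\cdot n^{\bigoh(1)}$.
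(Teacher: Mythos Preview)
Your record-size bound has a gap. You write that a record stores ``for each active task'' its fragment data and you bound the number of fragment descriptors, but you never account for the cost of recording \emph{which} tasks are active. Even though at most $c(\tw+1)\Delta$ tasks can cross the boundary, choosing that subset out of $|\mathcal{T}|$ already contributes a factor of $|\mathcal{T}|^{\bigoh(c\cdot\tw\cdot\Delta)}$---precisely the XP blowup of Theorem~\ref{thm:XPctwd}, not an FPT bound. If instead you intend to anonymize the partial routes and store only their demand and fragment shape, you lose the ability to match a partial route to the correct task: two tasks with the same demand but different profits, or with different second endpoints still in the future, become indistinguishable in the record, so you can neither tell when a route closes at the right vertex nor what profit to credit. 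One could try to repair this by also storing each partial route's future endpoint (which, by the length bound, must lie within distance $\ell$ of the bag and is hence one of boundedly many vertices), but your proposal does not do this, and handling profits and preventing the same task from being routed twice would still need additional argument.

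The paper takes a different route that avoids tracking partial routes of individual tasks altogether. Since every admissible route has length at most $\ell$ and must touch the bag whenever it connects $\past(t)$ to the rest of the graph, any task that becomes fully past at a node $t$ has its entire route contained in the ball of radius $\ell$ around $\chi(t)$. The record at $t$ therefore stores only the total capacity already consumed on each of the $\bigoh(\tw^2\Delta^{2\ell})$ \emph{visible} edges in that ball, plus the accumulated profit, giving $(c+1)^{\bigoh(\tw^2\Delta^{2\ell})}$ records---a pure function of the parameters. Each task is then handled in one shot at the node where its second endpoint moves into the past, by branching over its at most $\Delta^{\ell}$ candidate routes; since arbitrarily many tasks may become fully past at a single forget or join node, a second inner layer of dynamic programming processes them one at a time. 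Task identities never enter the record.
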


While the proof of Theorem~\ref{thm:fptctwdell} also relies on leaf-to-root dynamic programming (as the vast majority of tree-width based algorithms), it is non-standard in the sense that at each point where we compute new information for our dynamic programming table, we need to invoke a separate dynamic-programming subprocedure as a second layer of the computation. Moreover, the algorithm developed for the proof of Theorem~\ref{thm:fptctwdell} also establishes \XP-tractability for the problem when one drops $c$ in the parameterization: 
 
\begin{restatable}{theorem}{xptwdell}
\label{thm:xptwdell}
\UF\ is in \XP\ parameterized by the treewidth $\tw$ and the maximum degree $\Delta$ of the input graph along with the maximum length $\ell$ of any admissible flow route.
\end{restatable}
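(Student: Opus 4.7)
The plan is to show that the algorithm constructed for Theorem~\ref{thm:fptctwdell} already yields the claimed XP bound, once its dependence on the maximum capacity $c$ is analyzed more carefully. Concretely, I would argue that its running time can be written as $c^{g(\tw,\Delta,\ell)}\cdot n^{\bigoh(1)}$ for some computable function $g$; since capacities are encoded in unary and hence $c\leq n^{\bigoh(1)}$, this collapses to $n^{g'(\tw,\Delta,\ell)}$, matching the XP regime with $c$ removed from the parameter set.

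I would proceed by re-examining the records stored at each bag $B$ of the nice tree decomposition used to prove Theorem~\ref{thm:fptctwdell}. Because every task routed in a feasible solution uses a path of length at most $\ell$, the restriction of such a path to $B$ visits at most $\ell+1$ vertices of $B$; hence the number of abstract \emph{crossing patterns} by which a single task can interact with $B$ is bounded by $(\tw+1)^{\ell+1}$, a function of $\tw$ and $\ell$ alone. A record at $B$ then distributes demand across (edge, pattern) slots, where each slot carries a counter in $\{0,1,\ldots,c\}$ and the number of relevant (edge, pattern) pairs is at most $(\tw+1)\Delta\cdot(\tw+1)^{\ell+1}$. This gives at most $c^{\bigoh((\tw+1)^{\ell+2}\Delta)}$ records per bag, which is $n^{\bigoh(g(\tw,\Delta,\ell))}$ under the unary-encoding assumption on~$c$.

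The inner dynamic-programming subprocedure used in Theorem~\ref{thm:fptctwdell} to implement each outer transition would then be invoked as a black box: it matches concrete input tasks to the abstract slots of a record and verifies endpoint consistency, demand feasibility, and correct propagation of remaining path-length budgets across neighboring bags. All quantities manipulated by this subprocedure---demand counters, path-length budgets, and counts of tasks per pattern---are bounded by polynomials in $c$ and $n$, so a straightforward re-analysis shows that each invocation runs in time polynomial in the outer table size. Combining the bound on the number of records per bag with this polynomial-time cost of each transition, and multiplying over the at most $\bigoh(n)$ bags of the decomposition, yields the announced overall running time of $n^{g'(\tw,\Delta,\ell)}$.

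The main obstacle will be verifying that the inner DP does not silently incur a factor that is super-polynomial in $c$; this boils down to controlling the combination of profiles at join bags, where two child profiles must sum coordinate-wise to the parent profile, and to arguing that enumerating the compatible pairs of child profiles also contributes only an $n^{\bigoh(g(\tw,\Delta,\ell))}$ overhead. Correctness of the algorithm is inherited from Theorem~\ref{thm:fptctwdell} through the same consistency arguments between outer records and the second-layer DP, and hence need not be reproved.
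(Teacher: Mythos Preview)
Your high-level strategy is exactly what the paper does: Theorem~\ref{thm:xptwdell} is obtained as an immediate corollary of Lemma~\ref{lem:fptruntime}, which bounds the running time of the algorithm underlying Theorem~\ref{thm:fptctwdell} by $|\III|\cdot (c+1)^{\bigoh(\tw^2\cdot\Delta^{2\ell})}$; since capacities are unary-encoded, $c\le |\III|$ and the bound collapses to $|\III|^{\bigoh(\tw^2\cdot\Delta^{2\ell})}$.

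Where you diverge from the paper is in your speculation about the record structure. The paper does \emph{not} use ``crossing patterns'' or (edge, pattern) slots. Instead, for each bag $\chi(t)$ it defines the set $E_{\vis}(t)$ of \emph{visible edges}---those edges both of whose endpoints lie within distance $\ell$ of $\chi(t)$---and a record is simply a map $\Lambda:E_{\vis}(t)\to[c]_0$ recording how much capacity has already been consumed on each visible edge, together with a profit value $\Omega$. The key observation is $|E_{\vis}(t)|\le \bigoh(\tw^2\cdot\Delta^{2\ell})$, which directly gives the $(c+1)^{\bigoh(\tw^2\cdot\Delta^{2\ell})}$ bound on the number of records. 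Your pattern-based counting is more elaborate than necessary and yields a looser (though still adequate) bound; the paper's approach buys a cleaner argument with no need to reason about how an individual path threads through the bag, only about aggregate edge loads in the $\ell$-neighborhood.
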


The above shows that $\ell$ can essentially ``replace'' $c$ in the parameterization of Theorem~\ref{thm:XPctwd} in order to circumvent the \NP-hardness of \UF. On the other hand, 
 the problem remains \NP-hard even for fixed values of $c+\tw+\ell$ (due to the aforementioned \NP-hardness of \textsc{Edge Disjoint Paths} on $K_{3,n}$~\cite{FleszarMS18}) and also for fixed values of $c+\Delta+\ell$ (due to a different, very recent \NP-hardness proof for \textsc{Edge Disjoint Paths} on grids with bounded-length paths~\cite{EibenGK23}). Finally, we complement Theorem~\ref{thm:xptwdell} with a straightforward reduction from \textsc{Unary Bin Packing}~\cite{JansenKMS10} that rules out fixed-parameter tractability under the same parameterization as Theorem~\ref{thm:xptwdell}. 

\begin{restatable}{theorem}{Wtwdell}
\label{thm:Wtwdell}
\UF\ is \Wh\ parameterized by the treewidth $\tw$ and the maximum degree $\Delta$ of the input graph along with the maximum length $\ell$ of any admissible flow route.
\end{restatable}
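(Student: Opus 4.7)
\medskip
\noindent\textbf{Proof proposal for Theorem~\ref{thm:Wtwdell}.}
The plan is to give a straightforward parameterized reduction from the \UBP problem, which is known to be \Wh\ parameterized by the number $k$ of bins~\cite{JansenKMS10}. Given an instance of \UBP\ consisting of items of unary-encoded sizes $s_1,\dots,s_n$ and $k$ bins of capacity $B$, I would construct a \UF\ instance on a very simple graph $G$ that mimics a ``parallel bundle'' of $k$ channels between a source $s$ and a sink $t$.

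More concretely, the construction would proceed as follows. Introduce vertices $s$ and $t$, and for each bin index $j\in[k]$ introduce an intermediate vertex $v_j$ together with the two edges $\{s,v_j\}$ and $\{v_j,t\}$, each of capacity $B$ (encoded in unary). For every item $i\in[n]$ create a task with endpoints $s$ and $t$, demand $s_i$, and profit $s_i$ (or simply $1$), and set the profit target to $\sum_i s_i$ (resp.\ $n$), so that a feasible solution must route every item. The resulting graph is isomorphic to $K_{2,k}$, whose treewidth equals $2$ (a canonical decomposition uses one bag $\{s,t,v_j\}$ per $j$, all attached to a single root bag $\{s,t\}$). The maximum admissible path length is $\ell=2$, since any $s$--$t$ path passes through exactly one $v_j$, and the maximum degree is $\Delta=k$ (at $s$ and $t$). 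Hence $\tw+\Delta+\ell\le k+4$, which is a valid bounding function of the source parameter.

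For correctness, note that in any feasible \UF\ solution each task is routed through exactly one vertex $v_j$, inducing an assignment of items to bins; the edge capacity $B$ on $\{s,v_j\}$ (equivalently $\{v_j,t\}$) enforces that the total size of items sent through $v_j$ is at most $B$, i.e., the assignment is a valid bin packing. Conversely, any valid packing yields a feasible routing by sending every item assigned to bin $j$ along the length-$2$ path through $v_j$. Hence the target profit is met if and only if the \UBP\ instance is a \yesInstance, and the reduction is an \FPT\ reduction with respect to the parameter $k$, which gives \Whness\ of \UF\ parameterized by $\tw+\Delta+\ell$.

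I expect no substantial obstacle: the construction is genuinely ``straightforward,'' and the only point that requires minor care is making sure $G$ is a simple graph (hence the length-$2$ subdivision through $v_j$ rather than a multi-edge between $s$ and $t$) and that all three structural parameters are simultaneously bounded by a function of $k$ alone. Writing out these bookkeeping checks, together with the trivial forward/backward translation between bin assignments and task routings, is the bulk of the formal proof.
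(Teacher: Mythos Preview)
Your proposal is correct and matches the paper's own proof essentially line for line: the paper also reduces from \UBP\ parameterized by the number of bins, builds the same $K_{2,k}$-shaped graph with a length-$2$ $s$--$t$ path per bin and one $(s,t)$ task per item, and reads off $\Delta=k$, $\ell=2$, and constant treewidth. Your bookkeeping is in fact slightly tighter (you correctly note $\tw(K_{2,k})=2$, whereas the paper states~$3$), but otherwise there is nothing to add.
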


Our results are summarized in Figure~\ref{fig:landscape}.
 
\begin{figure}[ht]
\begin{center}
    \begin{tikzpicture}
      
    \node at (0,-0.5) {$\tw+\Delta+c+\ell$};
    
    \node at (-4,-2) {$\tw+\Delta+c$};
    \node at (-1.4,-2) {$\tw+\Delta+\ell$};
    \node at (1.4,-2) {$\tw+c+\ell$};
    \node at (4,-2) {$\Delta+c+\ell$};

    \node at (-2.8,-3.75) {$\tw+\Delta$};
    \node at (-0.9,-3.75) {$\tw+c$};
    \node at (0.9,-3.75) {$\tw+\ell$};
    \node at (2.8,-3.75) {$\Delta+c$};
    \node at (-4.7,-3.75) {$\Delta+\ell$};
    \node at (4.7,-3.75) {$c+\ell$};

    \node at (-4,-5.25) {$\tw$};
    \node at (-1.2,-5.25) {$\Delta$};
    \node at (1.2,-5.25) {$c$};
    \node at (4,-5.25) {$\ell$};

    \node at (0,0.1) {\bf{\large FPT}};
    \node at (-4.2,-0.6) {\bf{\large XP and}};
        \node at (-4.2,-1.2) {\bf{\large W[1]-hard}};
    \node at (4.1,-1) {\bf{\large paraNP-hard}};

    \filldraw[fill = red, draw = white, fill opacity = 0.2] (-6,0.5) to (-6, -6.25) to (6, -6.25) to (6,0.5) to (4,0.5) to [bend left = 25] (0,-1.49) to [bend left = 30] (-1.5,-2.75) to [bend left = 15, rounded corners] (-5,-2.75) to [bend left = 60] (-4,0.5) to (-6, 0.5);
    
    \filldraw[fill = blue, draw = black, fill opacity = 0.2] (-4,0.5) to [bend right = 25] (0,-1.49) to [bend right = 25] (4,0.5) to (-4,0.5);

    \filldraw[fill = yellow, draw = black, fill opacity = 0.2] (0,-1.49) to [bend left = 30] (-1.5,-2.75) to [bend left = 15, rounded corners] (-5,-2.75) to [bend left = 60] (-4,0.5) to [bend right = 25] (0,-1.49);
    \end{tikzpicture}
\end{center} 
\vspace{-0.2cm}
\caption{The complexity landscape of \UF\ under structural parameterizations. Here $\tw$, $\Delta$, $c$ and $\ell$ denote the treewidth, maximum degree, maximum capacity and maximum length of an admissible route, respectively. A discussion of the (non-)applicability of other major structural parameters is provided in Section~\ref{sec:bounded_l}.}
\label{fig:landscape}
\end{figure}
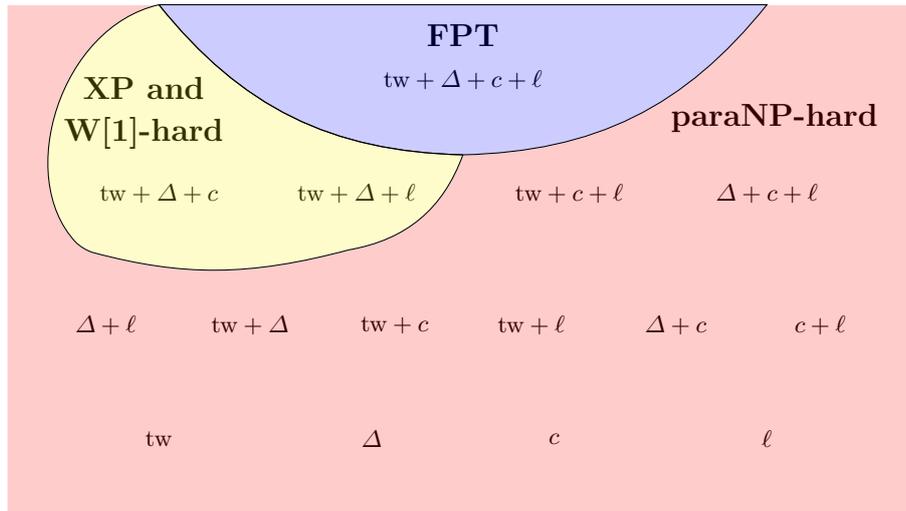

\smallskip
\noindent \textbf{Related Work.}\quad
As mentioned earlier, much of the previous work on \UF\ targeted approximate solutions on paths and trees. Chrobake, Woeginger, Makino and Xu showed that the problem is strongly \NP-hard even on paths with uniform profits and capacities~\cite{ChrobakWMX12}. Anagnostopoulos, Grandoni, Leonardi and Wiese obtained a PTAS for \UF\ on paths~\cite{Anagnostopoulos18a}, and this was later improved to a $(1 + \varepsilon)$ PTAS by Grandoni, Momke and Wiese~\cite{0001MW22}.

\UF on trees is a generalization of the problem of finding integral multi-commodity flows on trees, which is already APX-hard~\cite{GargVY97}. Mart{\'{\i}}nez{-}Mu{\~{n}}oz and Wiese obtained exact and approximate fixed-parameter algorithms for the problem on trees when the solution size is included in the parameterization~\cite{Martinez-MunozW21}. Chekuri, Ene and Korula showed that \UF on trees admits a $\bigoh(\log n)$ approximation when all weights are identical, and $O(\log^2 n)$ otherwise~\cite{ChekuriEK09}.
       
On general graphs, Guruswami, Khanna, Rajaraman, Shepherd, and Yannakakis established the \NP-hardness of approximating \UF\ within a factor of $m^{\frac{1}{2} - \varepsilon}$ for any $\varepsilon>0$~\cite{GuruswamiKRSY03}. 
Approximation algorithms for \UF\ on general graphs have also been studied, e.g., in the earlier works of Kolman and Scheideler~\cite{KS01,KolmanS06} or Baveja and Srinivasan~\cite{BavejaS00}.

\section{Preliminaries}\label{sec:prelim}
We use standard graph terminology~\cite{diestel} and assume familiarity with the foundations of parameterized complexity theory, including the complexity classes \FPT, \XP and \cc{W[1]}, and parameterized reductions~\cite{DowneyF13,CyganFKLMPPS15}. All graphs considered in this work are simple and undirected, and $\N$ is understood to refer to the set of non-negative integers. 
For an integer $n$, we denote by $[n]$ the set $\{1,\dots,n\}$ and set $[n]_0=[n]\cup \{0\}$.
For brevity, we sometimes use $V(G)$ and $E(G)$ to denote the vertex and edge sets of a graph $G$. We use the standard notation of $K_\bullet$ and $K_{\bullet, \circ}$ for complete and complete bipartite graphs, respectively.

\smallskip
\noindent \textbf{Treewidth.}\quad A \emph{tree-decomposition}~$\mathcal{D}$ of a graph $G=(V,E)$ is a pair 
$(D,\chi)$, where $D$ is a tree (whose vertices we call \emph{nodes}) rooted at a node $r$ and $\chi$ is a function that assigns each node $t$ a set $\chi(t) \subseteq V$ such that the following holds: 
 \begin{itemize}[topsep=3pt,noitemsep]
	\item For every $uv \in E$ there is a node	$t$ such that $u,v\in \chi(t)$.
	\item For every vertex $v \in V$, the set of nodes $t$ satisfying $v\in \chi(t)$ forms a nonempty subtree of~$T$.
\end{itemize}

A tree-decomposition is \emph{nice} if the following two conditions are also satisfied:
\begin{itemize}[topsep=3pt,noitemsep]
	\item $|\chi(\ell)|=1$ for every leaf $\ell$ of $T$ and $|\chi(r)|=0$.
	\item There are only three kinds of non-leaf nodes in $T$:
	\begin{itemize}[topsep=3pt,noitemsep]
        \item \textbf{Introduce node:} a node $t$ with exactly
          one child $t'$ such that $\chi(t)=\chi(t')\cup
          \{v\}$ for some vertex $v\not\in \chi(t')$.
        \item \textbf{Forget node:} a node $t$ with exactly
          one child $t'$ such that $\chi(t)=\chi(t')\setminus
          \{v\}$ for some vertex $v\in \chi(t')$.
        \item \textbf{Join node:} a node $t$ with two children $t_1$,
          $t_2$ where $\chi(t)=\chi(t_1)=\chi(t_2)$.
	\end{itemize}
\end{itemize}
The set $\chi(t)$ for a node $t$ of $D$ is called a \emph{bag}. 
The \emph{width} of a tree-decomposition $(D,\chi)$ is the size of its largest bag $D$ minus~$1$, and the \emph{treewidth} of the graph $G$,
denoted $\tw(G)$, is the minimum width of a tree-decomposition of~$G$. 
It is known that a tree-decomposition can be transformed into a nice tree-decomposition of the same width in linear time. 
Efficient fixed-parameter algorithms are known for computing a nice tree-decomposition of near-optimal width~\cite{Kloks94,BodlaenderDDFLP16,Korhonen21}:

\begin{proposition}[\cite{Korhonen21}]\label{fact:findtw} 	There exists an algorithm which, given an $n$-vertex graph $G$ and an integer~$k$, in time $2^{\bigoh(k)}\cdot n$ either outputs a tree-decomposition of $G$ of width at most $2k+1$ and $\bigoh(n)$ nodes, or determines that $\tw(G)>k$.
\end{proposition}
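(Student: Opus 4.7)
My plan is to follow the standard balanced-separator recursion for treewidth, augmented with a dynamic-programming improvement step to achieve single-exponential dependence on $k$.

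Recall the classical characterization: if $\tw(G) \leq k$, then for every weight function on $V(G)$ the graph admits a balanced separator $S$ with $|S| \leq k+1$. The skeleton of the algorithm would be a recursive procedure $\mathtt{Decompose}(H, W)$ which receives a subgraph $H \subseteq G$ together with a boundary $W \subseteq V(H)$ of size at most $2k+2$, and either returns a tree decomposition of $H$ of width at most $2k+1$ whose topmost bag equals $W$, or reports $\tw(G) > k$. In each recursive call, the procedure searches for a set $S \subseteq V(H) \setminus W$ with $|S| \leq k+1$ such that every connected component of $H - (W \cup S)$ contains at most half the vertices of $V(H) \setminus W$. The new root bag is $W \cup S$ (of size $\leq 3k+3$, trimmed to $\leq 2k+2$ by dropping boundary vertices no longer touching the remaining pieces), and each component is decomposed recursively with its induced boundary. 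Balancedness ensures recursion depth $\bigoh(\log n)$, and correctness of the ``no'' answer follows from the separator characterization: if no balanced $(k+1)$-separator exists for $H$, then $\tw(G) > k$.

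The crux is implementing the separator search in total time $2^{\bigoh(k)} \cdot n$ across all recursive calls. Naive enumeration over all $(k+1)$-subsets costs $n^{k+1}$ per call, which is too expensive. The plan is therefore to maintain, alongside the recursion, a global auxiliary tree decomposition $\mathcal{D}^{\star}$ of $G$ of width $\bigoh(k)$, obtained initially via a cruder but faster approximation such as the algorithm of Bodlaender, Drange, Dregi, Fomin, Lokshtanov and Pilipczuk~\cite{BodlaenderDDFLP16}, which already delivers a $\bigoh(1)$-approximation in $2^{\bigoh(k)} \cdot n$ time. The balanced-separator search on $H$ is then formulated as a bottom-up dynamic program on the restriction of $\mathcal{D}^{\star}$ to $H$: each state records, for a bag of $\mathcal{D}^{\star}$, which of its $\bigoh(k)$ vertices are placed into the candidate separator, and for each non-separator vertex, which ``side'' (connected component of the current partial split) it belongs to. Since each bag has $\bigoh(k)$ vertices, there are only $2^{\bigoh(k)}$ states per bag, and a single separator search thus takes $2^{\bigoh(k)} \cdot |V(H)|$ time.

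The top-down recursion is then interleaved with updates to $\mathcal{D}^{\star}$: once a subtree of the recursion has been finalized, the corresponding subtree of $\mathcal{D}^{\star}$ is contracted into a bounded-size summary (a ``protrusion'') so that subsequent DP passes pay only for the still-unprocessed part of the graph. The main obstacle will be pushing the aggregate running time down from the straightforward $2^{\bigoh(k)} \cdot n \log n$ bound (where the $\log n$ factor is paid for re-running the DP at each recursion level) to the clean $2^{\bigoh(k)} \cdot n$ bound in the statement. Achieving this requires a potential-function argument showing that each vertex of $G$ participates in only $\bigoh(1)$ amortized DP transitions across the entire recursion, together with a local constant-time update procedure for the DP tables when a subtree of $\mathcal{D}^{\star}$ is replaced by its protrusion summary. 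Once this amortization is in place, the final step---converting the constructed decomposition into a nice tree decomposition with $\bigoh(n)$ nodes---follows from the standard transformation, which runs in linear time without increasing the width.
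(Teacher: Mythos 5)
This proposition is not proven in the paper at all; it is cited verbatim from Korhonen's work~\cite{Korhonen21}, so there is no ``paper's own proof'' to compare against. Evaluating your proposal on its own merits, it contains two genuine gaps that prevent it from establishing the stated bound.

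First, the width bound of $2k+1$ does not follow from your construction. You grow the root bag to $W \cup S$ of size up to $3k+3$ and then claim to ``trim to $\leq 2k+2$ by dropping boundary vertices no longer touching the remaining pieces,'' but nothing forces enough boundary vertices to become detachable: a vertex of $W$ may remain adjacent to several components of $H - (W\cup S)$ and hence cannot be dropped. The classical balanced-separator recursion of Robertson--Seymour, and the textbook variants that separate $W$ rather than $V(H)\setminus W$, are careful about exactly this issue and only achieve bag size roughly $4k$; shaving this to $2k+2$ is precisely the hard part, and your trimming step does not supply it. Second, the amortization from $2^{\bigoh(k)}\cdot n\log n$ down to $2^{\bigoh(k)}\cdot n$ is asserted but not argued: a ``potential function showing each vertex participates in $\bigoh(1)$ amortized DP transitions'' plus ``constant-time local updates after protrusion replacement'' is exactly the content one would need to prove, and in the top-down recursion framework it is unclear how to charge re-runs of the separator DP to anything bounded.

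It is also worth knowing that Korhonen's actual algorithm is not a top-down balanced-separator recursion at all. It starts from an $\bigoh(k)$-width decomposition (e.g., via Bodlaender et al.~\cite{BodlaenderDDFLP16}) and repeatedly applies a local \emph{improvement} operation---a refinement of the Bellenbaum--Diestel lemma---that replaces the neighborhood of a too-large bag with a better local decomposition; the $2^{\bigoh(k)}\cdot n$ running time comes from an amortized analysis of these improvement steps via a potential on bag sizes, not from bounding recursion depth. Since the theorem is a black-box citation in this paper, the safest move is simply to cite it; if you want to reconstruct a proof, you would need to follow the improvement-based strategy rather than the recursion you sketched.
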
  

We let $D_t$ denote the subtree of $D$ rooted at a node $t$, and we use $\chi(D_t)$ to denote the set $\bigcup_{t'\in V(D_t)}\chi(t')$. In the context of leaf-to-root dynamic programming, it will be useful to also define
$\past(t)=\chi(D_t)\setminus \chi(t)$ and $\future(t)=V\setminus\past(t)$; to avoid any confusion, we remark that the latter set also contains $\chi(t)$. We also define the edge set $\present(t)$ containing all edge with precisely one endpoint in $\past(t)$. These will perform the role of the ``boundary'' in all of our dynamic programming algorithms, and we observe that $|\present(t)|$ is upper-bounded by a product of $|\chi(t)|$ and the maximum degree $\Delta$ of the input graph.

\smallskip
\noindent \textbf{Problem Formulation.}\quad
For complexity-theoretic reasons and without loss of generality, in this article we formalize \UF\ as a decision problem as opposed to the optimization variant typically considered in works focusing on approximation. To avoid any confusion, we remark that each of the developed algorithms is exact, deterministic and constructive in the sense that it can also output a suitable witness for every yes-instance---and hence fully applicable also to the optimization variant defined in the introduction.

An instance $\III$ of the \UF\ problem studied here consists of: 
\begin{enumerate}[topsep=4pt]
\item a simple undirected graph $G=(V,E)$, 
\item an edge labeling function $\gamma: E \rightarrow \N$, 
\item a set $\mathcal{T}$ of \emph{tasks} $T\subseteq V\times V \times \N \times \N$, 
\item a \emph{profit target} $\tau$, and
\item an integer length bound $\ell$. 
\end{enumerate} 

We call $\gamma(e)$ the \emph{capacity} of the edge $e$, and for a task $p=(s,t,d,w) \in T$ we call $d=d(p)$ its demand and $w=w(p)$ its profit.
 Given a subset $T'\subseteq T$ of tasks, we say that a \emph{flow routing} is a mapping which assigns to each task $(s,t,d,w)$ in $T'$ an $s$-$t$ path of length at most $\ell$ such that for each edge $e\in E$, the total demand of all paths in the flow routing passing through $e$ is at most $\gamma(e)$. The \emph{profit} of $T'$ is simply the sum of the profits of the tasks in $T'$, i.e., $\sum_{(s,t,d,w)\in T'}w$. The \UF\ problem then asks us to decide whether there exists a subset $T'$ which admits a flow routing and achieves a profit of at least $\tau$. 

We note that $\ell$ is not present in most definitions of \UF\ in the literature, and indeed our first set of results---in Section~\ref{sec:unbounded_l}---provide lower bounds for the case where $\ell$ is ignored (i.e., set to a sufficiently large bound such as $\ell=|V|$) along with algorithms that can deal with any choice of $\ell$. The size $|\III|$ of an instance $\III$ is simply the sum of the sizes of its components, whereas we assume throughout this work that the capacity function $\gamma$ is encoded in unary. This condition may be reformulated as requiring that $|\III|$ upper-bounds the maximum capacity of an edge, and is necessary in the sense that \UF\ with binary-encoded capacities is trivially \NP-hard even on a $K_2$ (see Section~\ref{sec:intro}).

\section{Solving Unsplittable Flow Regardless of Flow Length} \label{sec:unbounded_l}

This section is dedicated to the complexity of \UF\ when the bound $\ell$ on the length of the flow routes is not part of the parameterization. As our first result, we prove Theorem~\ref{thm:XPctwd}, which is the only case that is not para\NP-hard in the part of our landscape that excludes $\ell$ (cf.\ Figure~\ref{fig:landscape}).

\XPctwd*
   
\begin{proof}
We first invoke Proposition~\ref{fact:findtw} to obtain a nice tree-decomposition $\mathcal{D}=(D,\chi)$ of $G=(V,E)$ of width $k$, where $k\leq 2\tw+1$.
Our proof utilizes a dynamic programming routine that proceeds in a leaf-to-root fashion along $\mathcal{D}$.

We begin by defining the syntax of the records $R(t)$ we will use for each node $t$. $R(t)$ contains a set of tuples of the form $(\Lambda, \Theta,\Omega)$, where:
\begin{itemize}[topsep=3pt]
 \item $\Lambda: \present(t) \rightarrow \{ S\subseteq \mathcal{T}, \sum_{t\in S}d(t) \leq c\}$,
\item $\Theta: \big(\bigcup_{e\in \present(t)} \Lambda(e) \big)\rightarrow [\ell]$, and
\item $\Omega \in \N$.
\end{itemize} 

Intuitively, these records capture the following information about a hypothetical solution: $\Lambda$ specifies which tasks are routed through $\present(t)$, $\Theta$ captures their total flow length in $\past(t)$, and $\Omega$ contains information about their profit. We explicitly remark that, (1) unlike one would expect in a similar dynamic program, $\Lambda$ does not store the order in which the selected tasks cross the boundary $\present(t)$, and (2) $\Omega$ will not be formalized as a simple sum over the profits of all routed tasks but rather a ``weighted sum''. To simplify our later operations involving $\Theta$, we use $\Lambda(t)$ as shorthand for $\big(\bigcup_{e\in \present(t)} \Lambda(e) \big)$, i.e., the set of all tasks mapped by $\Lambda$ to at least one edge in $\present(t)$.

In order to formalize the above intuition by defining the semantics of the records, we first need to introduce the notion of a partial solution. A \emph{partial solution} at a node $t$ is a mapping $\zeta$ from a subset $Z\subseteq \mathcal{T}$ of tasks to a multiset $\mathcal{P}$ of sets of paths---each path being composed entirely of edges with at least one endpoint in $\past(t)$---with the following properties:
\begin{itemize}[topsep=3pt]
\item for each $z\in Z$ such that both endpoints lie in the future (i.e., in $\future(t)$), $z$ is mapped to a (possibly empty) set of vertex disjoint paths whose extremities are the vertices of $\chi(t)$.
\item for each $z\in Z$ such that exactly one endpoint $u$ lies in the past (i.e., in $\past(t)$), $z$ is mapped to a set of vertex disjoint paths, where exactly one of these connects $u$ to $\chi(t)$ and all of the (possibly $0$) others have their extremities in $\chi(t)$.
\item for each $z\in Z$ such that both of its endpoints $u$ and $v$ lie in the past, $z$ is either mapped to a single path between $u$ and $v$, or to a set of vertex disjoint paths such that one of these connects $u$ to $\chi(t)$, another one connects $v$ to $\chi(t)$, and all of the (possibly $0$) others have their extremities in $\chi(t)$.
\item for each edge $e\in E(G)$, its capacity $\gamma(e)$ is sufficient to accomodate the demand of all tasks of $Z$ containing a path using $e$.
\end{itemize} 

For a given partial solution $\zeta:Z \rightarrow \mathcal{P}$, we define its \emph{partial profit} $\omega$ as the sum of the profits of all tasks in $Z$ with both endpoints in $\past(t)$ plus half of the profits of all tasks in $Z$ with exactly one endpoint in $\past(t)$. 
A task is \emph{active} in $\zeta$ if it is routed to at least one edge in $\present(t)$, and we note that there can be at most $|\present(t)|\cdot c \leq \tw \cdot \Delta \cdot c$ active tasks.
We denote the set of all active tasks in $\zeta$ as $Z_{\act}$.
For each $z\in Z$, we let $\theta_z$ be the total number of edges in $\zeta(z)$.

We are now ready to formalize the semantics of our records. A tuple $(\Lambda, \Theta, \Omega)$ is \emph{admissible} for $R(t)$ if and only if 
there exists a corresponding partial solution $\zeta: Z\rightarrow \mathcal{P}$ which
\begin{enumerate}[topsep=4pt]
\item achieves a partial profit of $\Omega$, 
\item satisfies $\theta_z=\Theta(z)$ for each task $z\in Z_{\act}$, and
\item for each edge $e\in \present(t)$ and task $z$ in $Z$, $z \in \Lambda(e)$ if and only if $e$ occurs on a path in $\zeta(z)$.
\end{enumerate}
While being admissible is the main prerequisite for being part of our records, we also need to impose two additional conditions which will be important when dealing with join nodes. For a vertex $v\in \chi(t)$, a mapping $\Lambda$ as above and a task $z$, we say that $v$ is a \emph{jump-vertex} if $v$ has two incident present edges which are both mapped by $\Lambda$ to $z$. We say that a tuple $(\Lambda', \Theta', \Omega)$ \emph{directly supersedes} a tuple $(\Lambda, \Theta, \Omega)$ if there is a task $z$ such that (a) $\Lambda'$ can be obtained from $\Lambda$ by removing $z$ from the edges incident to some set of jump-vertices, and (b) $\Theta'$ can be obtained from $\Theta$ by reducing the value of $\Theta(z)$ by a positive integer. We then define the relation of \emph{superseding} as the transitive closure of \emph{directly superseding}, that is, $(\Lambda', \Theta', \Omega)$ \emph{supersedes} $(\Lambda, \Theta, \Omega)$ if the above consideration can be repeatedly applied to obtain the former from the latter.

We are now ready to define our records: a tuple $(\Lambda, \Theta, \Omega)$ is in $R(t)$ if and only if it is admissible and moreover satisfies the following two Optimality Conditions: 
\begin{enumerate}[topsep=4pt]
\item[(i)]  there is no admissible tuple of the form $(\Lambda, \Theta, \Omega')$ such that $\Omega'>\Omega$, and
\item[(ii)] there is no admissible tuple that supersedes $(\Lambda, \Theta, \Omega)$. 
  \end{enumerate}
 In other words, the optimality conditions ensure that we do not keep solutions which~(i) are clearly sub-optimal or (ii) can achieve the same connections outside of $\past(t)$ without requiring longer paths.

Having defined our records, we observe that the total size of $R(t)$ is upper-bounded by $|\III|^{\bigoh(k\cdot \Delta\cdot c)}$ simply due to the syntax: for each of the at most $k\cdot \Delta$ edges in $\present(t)$, there are at most $|\III|^c$ choices of tasks for $\Lambda$ and at most $|\III|$ choices for $\Theta$. Moreover, since the bag of the root $r$ of $\mathcal{D}$ is empty, we have that $\present(r)=\emptyset$; hence $R(r)$ contains a single entry of the form $(\emptyset,\emptyset,\Omega_r)$ where $\Omega_r$ is the maximum profit that can be achieved in $\III$. This means that to solve our problem, it now suffices to show how to compute $R(t)$ for each node $t$ in $\mathcal{D}$. 

The computation will depend on the kind of node $t$ is, and is carried out as follows.

\smallskip
\noindent \textbf{$t$ is a leaf node.} \quad
If $t$ is a leaf node, we recall that $\present(t)=\emptyset$ and hence $R(t)=\{(\emptyset,0,\emptyset\}$.

\smallskip
\noindent \textbf{$t$ is an introduce node.}\quad
Let $t'$ be the unique child of $t$, and notice that $\present(t)=\present(t')$ and $\past(t)=\past(t')$. Thus, partial solutions at the node $t$ are exactly partial solutions at $t'$, and we can correctly set $R(t):=R(t')$.

\smallskip
\noindent \textbf{$t$ is a forget node.}\quad
Let $t'$ be the unique child of $t$ and $u$ be the unique vertex in $\chi(t')\setminus \chi(t)$. In other words, $\past(t)=\past(t')\cup\{u\}$. 

\paragraph*{Construction.}
Before we construct the entries of $R(t)$, we first construct a set of candidate tuples as follows. 
First, for each entry $(\Lambda', \Theta', \Omega')\in R(t')$, we branch over all possible mappings of the ``new'' edges in $\present(t)\setminus\present(t')$ to sets of at most $c$ tasks to construct a set of ``extended mappings' of the form $\Lambda^*:\present(t) \cup \present({t'})\rightarrow \{ S\subseteq \mathcal{T}, \sum_{t\in S}d(t) \leq c\}$. For each such choice of~$\Lambda^*$, we check that request flows are preserved at $u$: for each task $z$ that is routed through (i.e., in the image of) an edge incident to $u$ via $\Lambda^*$, it must either be routed through exactly $2$ such edges if $u$ is not an endpoint of $z$, or exactly $1$ such edge if $u$ is an endpoint of $z$. If this verification fails, we discard the given degenerate choice of $\Lambda^*$.

Next, we update $\Omega'$ with the profit achieved at $u$: for every task that has $u$ as an endpoint and is routed by $\Lambda^*$ through one of the edges incident to $u$, we add one half of the task's profit to $\Omega'$ to obtain the new profit $\Omega$.
We proceed by creating the mapping $\Lambda$ by restricting $\Lambda^*$ to $\present(t)$.

It now remains to construct $\Theta: \Lambda(t) \rightarrow [\ell]$. For a task $z$, let $\delta_u(z)$ denote the number of ``new edges'' incident to $u$ that $z$ is routed through, i.e., $\delta_u(z)=|\{v\in \chi(t)~|~ \in \Lambda((u,v))\}|$. For each $z\in \Lambda(t')$, we now set $\Theta(z):=\Theta'(z) + \delta_u(z)$, while for each $z\in \Lambda(t)\setminus \Lambda(t')$ we simply set $\Theta(z):= \delta_u(z)$. We perform two final checks to discard dead branches: 
\begin{itemize}[topsep=3pt]
\item if $\Theta(z)> \ell$ for any $z$, we discard the tuple (as it does not satisfy the syntax of the records), and
\item for each $z\in \Lambda(t)$ that has at least one endpoint in $\past(t')$, we check that $z\in \Lambda(t')$ and discard the tuple if this fails (since every partial solution would need to route $z$ via $\present(t')$ in order to reach $\present(t)$).
\end{itemize}

This completes the construction of the candidate tuples for $t$. To compute $R(t)$, we now discard every candidate tuple that does not satisfy Optimality Conditions~(i) and~(ii), i.e., we discard tuples which are superseded by another candidate tuple or achieve suboptimal profit. Next, we argue that the records $R(t)$ constructed in this way are correct.

\paragraph*{Correctness.}
Every partial solution $\zeta:Z\rightarrow \mathcal{P}$ at $t$ induces precisely one partial solution $\zeta':Z' \rightarrow \mathcal{P}'$ at $t'$ that is obtained by omitting the edges in $\present(t)\setminus \present(t')$.
        Analogously, every partial solution $\zeta$ at $t$ can be obtained by prolonging and merging some paths in a partial solution $\zeta'$ of $t'$ along the edges between $u$ and $\chi(t)$, as well as creating some new paths on these same edges for newly active tasks. This is exactly equivalent to extending the mapping $\Lambda'$ on $\present({t'})$ to a mapping $\Lambda^*$ defined on $\present(t) \cup \present({t'})$. Note that the partial profit for $(Z, \mathcal{P})$ is composed of the partial profit for $(Z', \mathcal{P}')$ plus the profit achieved at $u$: half of the profit of each request in $Z$ having $u$ as an endpoint, which corresponds exactly to the difference between the profits of a partial solution at $t'$ and one at $t$. Similarly, the number of edges in each $\zeta(z)$ is exactly the number of edges in $\zeta(z')$ plus the number of ``new'' edges in $\present(t)\setminus \present(t')$ through which $z$ is routed in the partial solution, and this is reflected in our construction by setting $\theta_z = \theta'_z + \delta_u(z)$ if  $z\in Z'$, and $\theta_z = \delta_u(z)$ otherwise.

The above establishes that for every tuple that corresponds to a partial solution at $t$ (i.e., every tuple that ``should be'' in $R(t)$) will occur as a candidate tuple in our construction. To complete the proof of correctness, it still remains to argue that every candidate tuple $(\Lambda, \Theta, \Omega)$ is valid in the sense of corresponding to a partial solution $\zeta$. By induction, we know that the entry $(\Lambda', \Theta', \Omega')\in R(t')$ chosen in the branch leading to $(\Lambda, \Theta, \Omega)$ does correspond to a partial solution $\zeta'$ at $t'$. 

Let $Z''=\Lambda(t)\cup \Lambda(t')$ be the set of all tasks which were active in $\zeta'$ plus those which we assumed to be activated in $\Lambda$.
For each such $z\in Z$, let us construct $\zeta(z)$ by combining $(\Lambda, \Theta, \Omega)$ with $\zeta'(z)$. 
\begin{itemize}[topsep=3pt]
\item First, if $z\notin \Lambda(t)$, $\zeta(z)=\zeta'(z)$ since the routing is completely in the past in this case. 
\item Second, if $z\in \Lambda(t)$ but $z\notin \Lambda(t')$ (i.e., $z$ was not active in $\zeta'$), let $U_z=\{v\in \chi(t), z\in \Lambda(uv)\}.$ Because of the check on $\delta_u(z)$, $U_z$ is a singleton if and only if $z$ has $u$ as an endpoint, and a pair otherwise. If $U_z=\{v\}$, we define $\zeta(z)=\{uv\}$, and if $U_z=\{v,w\}$, we set $\zeta(z)=\{(w, u, v)\}$, i.e., the set containing the $3$-vertex path between $v$ and $w$ passing through $u$. 
\item In the last case, if $z\in \Lambda(t)\cap \Lambda'(t)$, we construct $U_z$ in precisely the same way as before, and distinguish based on its cardinality. 
If $U_z=\emptyset$, then the routing remains the same and we set $\zeta(z)=\zeta'(z)$. 
The case where 
$U_z=\{v\}$ can only happen if either $z$ has an endpoint in $u$ (in which case we simply add to $\zeta(z)$ a new path $(u,v)$), or $\zeta'(z)$ contained a path ending in $u$ (in which case we extend that path by adding the edge $uv$). 
 The final subcase is where $U_z=\{v,w\}$, and here we can correctly construct $\zeta(z):=\zeta'(z) \cup \{(v,u,w)\}$, where if $v$ and/or $w$ occurs in any of the paths in $\zeta'(z)$ then we merge these at $v$ and/or $w$.
\end{itemize}

Now, let us assume that the $\zeta$ constructed in the above way is \emph{not} a partial solution at $t$. The only way this can happen is if a task $z$ routed through $u$ is now mapped to a set containing a cycle $\iota$ (as opposed to a set of vertex-disjoint paths); in other words, one considers the case where a path in $\zeta'$ is closed into a cycle via $u$. Then by the definition of partial solutions and the fact that $\zeta'$ is a partial solution, no such $\iota$ can contain any endpoints of $z$. Thus, there will also exist a separate record $(\Lambda^*, \Theta^*, \Omega^*)\in R(t)$ which corresponds to the partial solution obtained by removing all such cycles $\iota$ (for all tasks $z$ forming such cycles) from $\zeta$. Moreover, $(\Lambda^*, \Theta^*, \Omega^*)$ supersedes $(\Lambda, \Theta, \Omega)$ because the neighbors of $u$ in $\chi(t)$ occurring in such cycles are jump-vertices. Thus, the final step of computing $R(t)$ in our algorithm prunes away all records which do not correspond to partial solutions at $t$.

By the above, there is a one-to-one correspondence between the candidate tuples constructed at $t$ and the partial solutions at $t$. Hence, the set of tuples constructed at $t$ by applying the optimality conditions is precisely the set $R(t)$, as claimed.

\smallskip
\noindent \textbf{$t$ is a join node.}\quad
Let $t_1$ and $t_2$ be the two children of $t$, and observe that $\present(t)$ is precisely the disjoint union of $\present(t_1)$ and $\present(t_2)$.

\paragraph*{Construction.}
We construct a set of candidate tuples for $t$ by enumerating over each choice of a tuple $(\Lambda_1,\Theta_1,\Omega_1)\in R(t_1)$ and $(\Lambda_2,\Theta_2,\Omega_2)\in R(t_2)$, and for each such pair we construct a tuple $(\Lambda,\Theta,\Omega)$ as follows. We set $\Lambda=\Lambda_1\cup \Lambda_2$ (i.e., each edge in $\present(t)$ maintains their mapping from the respective child), $\forall z\in \Lambda(t) \Theta(z)=\Theta_1(z)+\Theta_2(z)$ (whereas if either of the latter terms is undefined, we treat it as $0$), and $\Omega=\Omega_1+\Omega_2$. Before proceeding, we discard degenerate tuples where $\Lambda$ maps a task $z$ to more than two present edges incident to some non-endpoint of $z$ or more than one present edge incident to an endpoint of~$z$.

After completing the construction of all candidate tuples as per the above procedure, we once again compute $R(t)$ by discarding every candidate tuple that does not satisfy Optimality Conditions~(i) and~(ii), i.e., tuples which are superseded by another candidate tuple or achieve suboptimal partial profit. 

\paragraph*{Correctness.}\quad
Similarly as in the case of forget nodes, every partial solution $\zeta$ at $t$ induces a unique partial solution $\zeta_1$ at $t_1$ and a unique partial solution $\zeta_2$ at $t_2$.
        Analogously, every partial solution $\zeta$ at $t$ can be obtained by merging two partial solutions $\zeta_1$ at $t_1$ and $\zeta_2$ at $t_2$, and the choice of the latter two solutions is unique. By the correctness of our records for $t_1$ and $t_2$, the algorithm will identify a branch containing the tuple $(\Lambda_1,\Theta_1,\Omega_1)\in R(t_1)$ corresponding to $\zeta_1$ and the tuple $(\Lambda_2,\Theta_2,\Omega_2)\in R(t_2)$ corresponding to $\zeta_2$. The tuple $(\Lambda, \Theta, \Omega)$ constructed in this branch will correspond to $\zeta$, and will thus be admissible.

The above argument establishes that every admissible tuple in $R(t)$ will be present in our candidate set. However, it is important to note that---unlike in the case of forget nodes---the converse \textbf{need not hold for join nodes}. Indeed, combining two partial solutions may create closed cycles and thus not result in a partial solution (see Figure~\ref{fig:cycle}). While one could detect and deal with this issue by additionally keeping track of the order in which the present edges are visited in our records (resulting in an additional logarithmic factor in the exponent), our use of the optimality conditions allows us to circumvent it while maintaining a single-exponential running time dependency on our parameters.

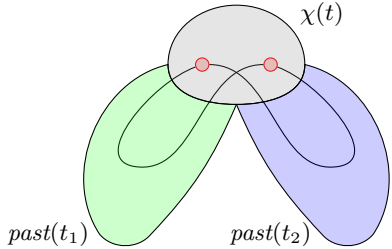
\begin{figure}[ht]
    \vspace{-0.3cm}
\begin{minipage}[c]{0.45\textwidth}
    \begin{tikzpicture}[scale=0.45,
      mycircle/.style={
         circle,
         draw=black,
         fill=black,
         fill opacity = 1,
         text opacity=1,
         inner sep=0pt,
         minimum size=5pt,
         font=\small},
         mycircle2/.style={
         circle,
         draw=white,
         fill=white,
         fill opacity = 0,
         text opacity=1,
         inner sep=0pt,
         minimum size=0pt,
         font=\small},
         mycirclered/.style={
         circle,
         draw=red,
         fill=red,
         fill opacity = 0.2,
         text opacity=1,
         inner sep=0pt,
         minimum size=5pt,
         font=\small},
         mycircle3/.style={
         circle,
         draw=black,
         fill=white,
         fill opacity = 0,
         text opacity=1,
         inner sep=0pt,
         minimum size=8pt,
         font=\small},
      myarrow/.style={},
      node distance=0.6cm and 1.2cm
      ]
    \node at (2.5,1.5) {$\chi(t)$};
    \node at (-5.5,-5) {$\past(t_1)$};
    \node at (1,-5) {$\past(t_2)$};

    \draw (1,0) node[mycirclered, draw] (v1) {};
    \draw (-1,0) node[mycirclered, draw] (v2) {};

    \draw (v1) to [in = 10, out = -20] (3,-3) to [out = 190, in = 0] (v2);
    \draw (v2) to [in = 170, out = -160] (-3,-3) to [out = -10, in = 180] (v1);
    \filldraw[fill = gray, draw = black, fill opacity = 0.2] (0,-1.175) to [out = 0, in = -90] (2,0) to [out = 90, in = 0] (0,1.75) [out = 180, in = 90] to (-2,0) to [out = -90, in = 180] (0,-1.175);
    \filldraw[fill = green, draw = black, fill opacity = 0.2] (-2,0) to [out = -90, in = 180] (0,-1.175) to [bend left = 10] (-2.5, -5) to [bend left = 50] (-4,-5) to [bend left = 55] cycle;
    \filldraw[fill = blue, draw = black, fill opacity = 0.2] (0,-1.175) to [out = 0, in = -90] (2,0) to [bend left = 55] (4, -5) to [bend left = 50] (2.5,-5) to [bend left = 10] cycle;
    \end{tikzpicture}
    \end{minipage}\hfill
    \begin{minipage}[c]{0.5\textwidth}
\caption{A join node $t$ with two children $t_1$, $t_2$, where two paths between the red vertices---one in each of the partial solutions for $t_1$ and $t_2$---create a cycle. By combining a pair of tuples $(\Lambda_1,\Theta_1,\Omega_1)\in R(t_1)$ and $(\Lambda_2,\Theta_2,\Omega_2)\in R(t_2)$ which represent partial solutions containing such paths, the algorithm could produce a tuple which does not represent any partial solution at $t$.}
\label{fig:cycle}
\end{minipage}
    \vspace{-0.3cm}
\end{figure}

Indeed, let us consider a tuple $(\Lambda, \Theta, \Omega)$ constructed by the algorithm via a combination of the tuples $(\Lambda_1,\Theta_1,\Omega_1)\in R(t_1)$ and $(\Lambda_2,\Theta_2,\Omega_2)\in R(t_2)$. By the correctness of the records at $t_1$ and $t_2$, the latter two records correspond to some partial solutions $\zeta_1$ and $\zeta_2$, respectively. 

Assume that combining $\zeta_1$ and $\zeta_2$ results in at least one task $z\in \Lambda_1(t_1)\cap \Lambda_2(t_2)$ such that there exists a set $P_{\cyc}\subseteq \zeta_1(z)\cup \zeta_2(z)$ of paths which form a cycle in $G$. Then $(\Lambda, \Theta, \Omega)$ is a candidate tuple, even though there is no corresponding partial solution. However, there also exist partial solutions $\zeta_1'$ and $\zeta_2'$ which do not contain the paths forming $P_{\cyc}$ but are otherwise identical to $\zeta_1$ and $\zeta_2$, respectively; indeed, each of the paths contributing to $P_{\cyc}$ starts and ends at $\present(t)$ and hence is ``optional''. Let $(\Lambda'_1,\Theta'_1,\Omega_1)\in R(t_1)$ and $(\Lambda'_2,\Theta'_2,\Omega_2)\in R(t_2)$ be the two tuples corresponding to $\zeta_1$ and $\zeta_2$, respectively, and let $(\Lambda',\Theta',\Omega)$ be the tuple constructed by their combination. Crucially, while neither of the aforementioned records in the children are superseded by each other, the record $(\Lambda, \Theta, \Omega)$ is immediately superseded by $(\Lambda',\Theta',\Omega)$---indeed, the two are identical except for the latter not containing the cycle $P_{\cyc}$ in the image of $z$. Hence the algorithm will in this case never add $(\Lambda, \Theta, \Omega)$ to $R(t)$.

Next, assume the converse: combining $\zeta_1$ and $\zeta_2$ does not result in any task being mapped to a cycle in $G$. In this case, the combination of the two partial solutions either results in a partial solution $\zeta$ at $t$, or contains a task $z$ which is routed through some vertex $v\in\chi(t)$ too many times (i.e., either $z$ starts at $v$ and is routed to both $\zeta_1$ and $\zeta_2$, or $z$ does not start at $v$ but occurs on at least three edges incident to $v$). In the latter case, the algorithm correctly discards the tuple $(\Lambda, \Theta, \Omega)$ due to the check carried out at the end of the first paragraph of the join node procedure. In the former case, the algorithm adds $(\Lambda, \Theta, \Omega)$ to its set of candidate tuples. 

In particular, the above implies that for every tuple that belongs to $R(t)$, our algorithm will add it to the set of candidate tuples. Moreover, the only candidate tuples which are identified by the algorithm without being admissible are superseded by some admissible candidate tuple. Hence, the set of tuples constructed at $t$ by the algorithm is precisely $R(t)$, as required.

\smallskip
\noindent \textbf{Running time.}\quad
As mentioned earlier, the size of $R(t)$ is upper-bounded by $|\III|^{\bigoh(k\cdot \Delta\cdot c)}$. The computational cost of computing the set of records at each node $t$ of $\mathcal{D}$ is upper-bounded by $|R(t)|^2$, and the number of times we need to perform this computation is upper-bounded by $|\III|$. The cost of invoking Proposition~\ref{fact:findtw} to compute $\mathcal{D}$ is dominated by the aforementioned terms, and hence we can upper-bound the running time of the algorithm by $|\III|^{\bigoh(k\cdot \Delta\cdot c)}$. \qed
\end{proof}

Recalling the discussion in Section~\ref{sec:intro}, we note that dropping any of the three parameters listed in Theorem~\ref{thm:XPctwd} results in para\NP-hardness: for $c$, $\tw$ and $\Delta$ this follows by the para\NP-hardness of \UF\ on paths~\cite{BonsmaSW11}, grids with edge-capacity $1$~\cite{Marx04} and $K_{3,n}$ with edge-capacity $1$~\cite{FleszarMS18}, respectively. 
 Hence, the main complexity-theoretic question that remains for the aforementioned three parameters is to exclude the existence of a fixed-parameter algorithm (under standard complexity-theoretic assumptions). We do so by establishing a stronger claim: \UF\ is \Wh\ parameterized by $c$ even in the well-studied setting~\cite{ChrobakWMX12,Anagnostopoulos18a,0001MW22} where the input graph is a path. We remark that all of the aforementioned lower-bound results (including Theorem~\ref{thm:Wctwd}) hold even if $\ell=|V|$, i.e., when the additional constraint on the length of the paths is ignored.

\Wctwd*

\begin{proof}
We reduce from the classical \Wh\ \MC problem (\textsc{MCC}) parameterized by the number of colors~\cite{DowneyF13,CyganFKLMPPS15}.

 Let $G=(V=(V_1,\dots,V_k),E)$ be an instance of \textsc{MCC}, and assume without loss of generality that $k$ is even; indeed, if $k$ were odd then we could produce an equivalent instance of \textsc{MCC} by adding a new part $V_{k+1}$ containing a single vertex that is adjacent to all other vertices.
    Let $(v_1, \dots, v_n)$ be an ordering on $V$ which respects the partitioning of vertices into the color classes of $G$---specifically, we require there to exist $i_0=0, i_1, \dots, i_k$ with $V_j = \{v_{i_{j-1}+1}, \dots v_{i_j}\}$ for all $j\in [k]$.

We construct an instance $\III$ of $\UF$ as follows.
    First, we create a path $P = (s_1, \dots, s_k, v_{1s}, v_{1t}, \dots, v_{ns}, v_{nt}, t_1, \dots, t_k)$, consisting of $2k$ control vertices at the start and end (forming the subpaths $s_1,\dots, s_k$ and $t_1,\dots,t_k$) and $2$ consecutive vertices of the form $v_{js}, v_{jt}$ for each $v_j\in V$ which appear in the order constructed above. We denote the corresponding partition of $\{v_{jr} | 1\leq j \leq n, r \in \{s,t\}\}$ into color classes as $(V_1', \dots, V_k')$. 

We set the capacity of each of the edges between control vertices as follows: $c(s_i, s_{i+1}) = c(t_{k-i}, t_{k-i+1}) := \sum_{j = 1}^i \max(0,k-2j+1)$. For all other edges, we set the capacity to $p^2$ where $p=0.5k$.
   We set $\tau$ to equal the sum of the capacities of all edges in $P$, and set $\ell:=|V(P)|$ (or to any other sufficiently large number).
    Finally, we create the following tasks:
         \begin{itemize}[topsep=3pt]
        \item $T_{sj} = (s_i, v_{js},\max(0,k-2i+1), \bullet)$ for all $v_j \in V_i$,
        \item $T_i = (v_{is}, v_{it}, \max(k-i, i-1), \bullet)$ for all $v_i \in V$,
        \item $T_{tj} = (v_{jt}, t_i, \max(0,2i-k-1), \bullet)$ for all $v_j \in V_i$,
        \item $T_{ij} = (v_{it}, v_{js}, 1, 1)$ for all $v_iv_j \in E$ with $i < j$ and $c(v_i) \neq c(v_j)$,
    \end{itemize}
    where the profit of each task is set to its demand times the number of edges between its two endpoints. Observe that since $\tau$ is the sum of all edge capacities in $P$, the instance $\III$ constructed above is a yes-instance if and only if there exists a set of tasks which together use up all of the capacity of all edges in $P$; indeed, each task provides precisely as much profit as the amount of capacity it takes from $\III$. We call tasks of the form $T_{ij}$ \emph{edge tasks}.
    
    Intuitively, the reduction works by encoding a solution to \textsc{MCC} via a collection of tasks which have endpoints in precisely one consecutive pair of vertices in each $V'_i$, $i\in [k]$, as illustrated in Figure~\ref{fig:reduction}.
    To complete the proof, it suffices to formally establish that $G$ contains a multicolored clique if and only if there is a flow on $P$ with total profit $\tau$. 
         \begin{figure}[ht]
  \begin{center}
    \begin{tikzpicture}[
      mycircle/.style={
         circle,
         draw=black,
         fill=black,
         fill opacity = 1,
         text opacity=1,
         inner sep=0pt,
         minimum size=5pt,
         font=\small},
         mycircle2/.style={
         circle,
         draw=white,
         fill=white,
         fill opacity = 0,
         text opacity=1,
         inner sep=0pt,
         minimum size=0pt,
         font=\small},
         mycircle3/.style={
         circle,
         draw=black,
         fill=white,
         fill opacity = 0,
         text opacity=1,
         inner sep=0pt,
         minimum size=5pt,
         font=\small},
         mycircleblue/.style={
         circle,
         draw=blue,
         fill=blue,
         fill opacity = 0.2,
         text opacity=1,
         inner sep=0pt,
         minimum size=5pt,
         font=\small},
         mycirclered/.style={
         circle,
         draw=red,
         fill=red,
         fill opacity = 0.2,
         text opacity=1,
         inner sep=0pt,
         minimum size=5pt,
         font=\small},
         mycircleorange/.style={
         circle,
         draw=orange,
         fill=orange,
         fill opacity = 0.2,
         text opacity=1,
         inner sep=0pt,
         minimum size=5pt,
         font=\small},
         mycirclecyan/.style={
         circle,
         draw=cyan,
         fill=cyan,
         fill opacity = 0.2,
         text opacity=1,
         inner sep=0pt,
         minimum size=5pt,
         font=\small}
      ]
     \draw (-1,0) node[mycircleblue, draw, label=right:$v_1$] (v1) {};
    \draw (1,0) node[mycircleblue, draw,label=left:$v_2$] (v2) {};
    
    \draw (0,-1) node[mycirclered, draw,label=right:$v_3$] (v3) {};
    
    \draw (-1,-2) node[mycircleorange, draw,label=right:$v_4$] (v4) {};
    \draw (1,-2) node[mycircleorange, draw,label=left:$v_5$] (v5) {};
    
    \draw (0,-3) node[mycirclecyan, draw,label=right:$v_6$] (v6) {};
    
    \draw (v1) edge[very thick] (v3);
    \draw (v2) to (v3);
    \draw (v3) edge[bend left = 25, very thick] (v6);
    \draw (v1) edge[very thick] (v4);
    \draw (v4) edge[very thick] (v6);
    \draw (v1) edge[very thick] (v6);
    \draw (v3) edge[very thick] (v4);
    \draw (v2) to (v5);
    \draw (v5) to (v3);

         \draw (2.1,-1.5) node[mycircle3,draw] (s1) {};
    \draw (2.4,-1.5) node[mycircle3,draw] (s2) {};
    \draw (2.7,-1.5) node[mycircle3,draw] (s3) {};
    \draw (3,-1.5) node[mycircle3,draw] (s4) {};

    \draw (3.5,-1.5) node[mycircleblue, draw] (v1s) {};
    \draw (4,-1.5) node[mycircleblue, draw] (v1t) {};
    
    \draw (4.5,-1.5) node[mycircleblue, draw] (v2s) {};
    \draw (5,-1.5) node[mycircleblue, draw] (v2t) {};
    
    \draw (5.5,-1.5) node[mycirclered, draw] (v3s) {};
    \draw (6,-1.5) node[mycirclered, draw] (v3t) {};
    
    \draw (6.5,-1.5) node[mycircleorange, draw] (v4s) {};
    \draw (7,-1.5) node[mycircleorange, draw] (v4t) {};
    
    \draw (7.5,-1.5) node[mycircleorange, draw] (v5s) {};
    \draw (8,-1.5) node[mycircleorange, draw] (v5t) {};
    
    \draw (8.5,-1.5) node[mycirclecyan, draw] (v6s) {};
    \draw (9,-1.5) node[mycirclecyan, draw] (v6t) {};

    \draw (9.5,-1.5) node[mycircle3,draw] (t1) {};
    \draw (9.8,-1.5) node[mycircle3,draw] (t2) {};
    \draw (10.1,-1.5) node[mycircle3,draw] (t3) {};
    \draw (10.4,-1.5) node[mycircle3,draw] (t4) {};

    \draw (s1) -- (s2) -- (s3) -- (s4) -- (v1s) -- (v1t) -- (v2s) -- (v2t) -- (v3s) -- (v3t) -- (v4s) -- (v4t) -- (v5s) -- (v5t) -- (v6s) -- (v6t) -- (t1) -- (t2) -- (t3) -- (t4);
    
    \draw (v1s) -- (v1t) node [midway, label={[label distance=-2pt]above:$v_1$}] {};
     \draw (v3s) -- (v3t) node [midway, label={[label distance=-2pt]above:$v_3$}] {};
    \draw (v4s) -- (v4t) node [midway, label={[label distance=-2pt]above:$v_4$}] {};        
     \draw (v6s) -- (v6t) node [midway, label={[label distance=-2pt]above:$v_6$}] {};    
   
    \draw (s1) edge[bend right = 30, very thick] (v1s);
    \draw (s2) edge[bend left = 30, very thick] (v3s);

    \draw (v4t) edge[bend left = 30, very thick] (t3);
    \draw (v6t) edge[bend right = 30, very thick] (t4);

    \draw (v1t) edge[bend right = 30, very thick] (v3s);	
    \draw (v1t) edge[bend right = 30, very thick] (v4s);
    \draw (v1t) edge[bend right = 30, very thick] (v6s);
    \draw (v3t) edge[bend left = 30, very thick] (v4s);
    \draw (v3t) edge[bend right = 30, very thick] (v6s);
    \draw (v4t) edge[bend right = 30, very thick] (v6s);

    \draw (v1s) edge[bend left = 30, very thick] (v1t);
    \draw (v3s) edge[bend left = 30, very thick] (v3t);
    \draw (v4s) edge[bend left = 30, very thick] (v4t);
    \draw (v6s) edge[bend left = 30, very thick] (v6t);
    \end{tikzpicture}
\end{center} 
\caption{An illustration of the construction used in the proof of Theorem~\ref{thm:Wctwd}. On the left, we depict an instance of \MC\ with a solution highlighted in bold. On the right, we show the corresponding instance of \UF\ on the path (the horizontal edges), where the tasks corresponding to the highlighted clique are depicted via bold curves. For ease of presentation, we omit the tasks not chosen in this particular solution from the figure.}
\label{fig:reduction}
\end{figure}

    For the forward direction, let $C=(v_{\alpha(1)}, \dots, v_{\alpha(k)}) \subseteq V$ be a multicolored clique. We claim that the set of tasks $\{T_{s\alpha(i)}, T_{\alpha(i)}, T_{t\alpha(i)}, T_{\alpha(i) \alpha(i')}~|~1\leq i < i'\leq k\}$ represents an unsplittable flow that fully exhausts all capacities on $P$. 
         
    As $C$ is a multicolored clique, we choose exactly one vertex of every color, so by construction the capacity of all edges of the form $s_i s_{i+1}$ and $t_i t_{i+1}$ is fully exhausted. Let $vw$ be an edge on $P$ between $v_{\alpha(i)t}$ and $v_{\alpha(i+1)s}$, and note that the tasks which are routed through $vw$ are $T_{sq}$ for $q \geq i+1$, $T_{tq}$ for $q \leq i$ and $i \cdot (k-i)$ different tasks of the form $T_{qx}$ for $q\leq i$, $x\geq i+1$. 
    To count how much capacity is used by the tasks routed through $vw$, we consider $vw$ compared to its adjacent egde $uv$. If $u,v,w \notin C$ then $uv$ and $vw$ have the same active tasks. If $u = v_{\alpha(i) s}$ and $v = v_{\alpha(i) t}$, then $uv$ gains the task $T_{\alpha(i)}$ but loses $T_{t\alpha(i)}$ and all edge tasks connecting $v_{\alpha(i) t}$ to the $k-i$ later vertices. So the total difference in used capacity is
    \begin{eqnarray*}
        &&\max(k-i, i-1) - (k-i) - \max(0,2i-k-1) \\
        &=& \max(0, 2i-k-1) - \max(0, 2i-k-i) = 0
    \end{eqnarray*}
    Similarly, if $v = v_{\alpha({i+1}) s}$ and $w = v_{\alpha({i+1}) t}$ the difference is also $0$. As all these edges have the same used capacity, we can simply check the capacity used by $(s_k, v_{1s})$, which is:
    \begin{eqnarray*}
        \sum_{i=1}^k \max(0, k-2i+1) =              \sum_{i=0}^{p-1} 2p+1 = p^2                       \end{eqnarray*}
    So the entire capacity of every edge is exhausted, so there exists a maximal flow using all the capacity on $P$, and thus a set of tasks giving a total profit of $B$. 
    \\

    For the converse direction, let $T_{\emph{max}}$ be a set of tasks of total profit $\tau$. In this direction, we assume w.l.o.g.\ that $T_{\emph{max}}$ contains no tasks with profit $0$, as these will not increase the total profit\footnote{To avoid confusion, we remark that the solutions constructed for the forward direction included superfluous tasks of capacity $0$.} Then, by construction, for each color $j\in [k]$ the considered solution $T_{\emph{max}}$ of $\UF$ contains precisely one task from $\{T_{sj}, T_{tj}\}$, as the other has a demand and profit of $0$. 
    
    Next, for each $j\in [k]$ let $\beta(j)$ denote the index of the vertex $v_{\beta(j)}$ in $V_j$ such that there exists either a task $T_{s\beta(j)}$ or $T_{t\beta(j)}$ in $T_{\emph{max}}$. Note that such a $\beta(j)$ must exist for each $j\in [k]$, because $k$ is even and thus either $\max (0, k-2j+1)$ or $\max (0, 2j-k-1)$ is greater than $0$. Our aim is now to establish the following claim: for each $j < \frac{k+1}{2}$, $T_{\emph{max}}$ contains
    \begin{itemize}[topsep=3pt]
    \item for each $i_1 < i_2 \leq j$, the tasks $T_{\beta(i_1)\beta(i_2)}$ and $T_{\beta(k-i_2+1)\beta(k-i_1+1)}$, and
    \item for each $i\leq j$:
    \begin{itemize}[topsep=3pt] 
    \item exactly $k-i$ edge tasks starting at $v_{\beta(i)t}$,
    \item exactly $k-i$ edge tasks ending at $v_{\beta(k-i+1)t}$, 
    \item the tasks $T_{\beta(i)}$, $T_{\beta(k-i+1)}$, and
    \item no other tasks starting or ending in $V_i'$ or $V_{k-i+1}'$ besides $T_{\beta(i) s}$, $T_{\beta(k-i+1) t}$.
    \end{itemize}
    \end{itemize}
  
We establish the aforementioned claim by inducrion on $j$.
    By construction, the entire capacity in $P$ until $v_{\beta(1) s}$ is exhausted. As $T_{\beta(1) s}$ ends in $v_{\beta(1) s}$, the edge $v_{\beta(1) s}v_{\beta(1) t}$ is not fully exhausted and thus $T_{\emph{max}}$ must contain $T_{\beta(1)}$. To fully exhaust the following edges, $T_{\emph{max}}$ must contain $k-1$ edge tasks starting at $v_{\beta(1) t}$, as no other tasks of positive demand start at $v_{\beta(1) t}$. By our construction, no edge tasks start and end in the same color, thus no other task starts or ends in $V_1'$. By the same arguments $T_{\emph{max}}$ contains $T_{\beta(k) t}$ and exactly $k-1$ edge tasks ending in $v_{\beta(k)}$. Thus, the claim holds for $j=1$.

    Now let us assume that until some choice of $j$, the induction hypothesis holds and $j+1 < \frac{k+1}{2}$. Then let $v_{m s}$ be the first vertex after $v_{\beta(j) t}$ such that $T_{\emph{max}}$ contains an edge task ending in $v_{m s}$. Note that no other edge task can start between $v_{\beta(j) t}$ and $v_{m s}$ as the capacity is fully used up. As the capacity of the edge $v_{m s}v_{m t}$ is not fully exhausted, $T_{\emph{max}}$ must contain $T_m$. 
     If $m > \beta(j+1)$, then the edge $v_{\beta(j+1) s}v_{\beta(j+1) t}$ is not fully exhausted, as we cannot take $T_{\beta(j+1)}$ into $T_{\emph{max}}$ without exceeding the capacity due to $\max (0, k-2(j+1)+1) < \max (k-(j+1), j)$, which contradicts our assumption. 
    Likewise, if $m < \beta(j+1)$, then $v_{m s} v_{m t}$ is not fully exhausted, as at most $j$ edge tasks can end in $v_{m s}$, but $T_m$ has a capacity of $\max (k-j-1, j) = k-j-1$ as $j+1 < \frac{k+1}{2}$. Thus $m = \beta(j+1)$ and exactly $j$ edge tasks must end in $v_{\beta(j+1) s}$---one from each $v_{\beta(i) t}$ with $i \leq j$---to be able to include $T_{\beta(j+1)}$ in $T_{\emph{max}}$. 
     
    As, starting at $v_{\beta(j+1) t}$, the entire capacity is used along all edges of the path and no non-edge task in $T_{\emph{max}}$ ends in $V_{j+1}'$ after $v_{\beta(j+1) t}$, for another taks to start in $V_{j+1}'$ some edge task would need to end in some $v_{m s} \in V_{j+1}'$. But at most $j$ edge tasks can end in $v_{m s}$, and $\max (k-j-1,j) = k-j-1$, the edge $v_{m s} v_{m t}$ would not be fully used. So no other task starts or ends in $V_{j+1}'$. By mirroring each of these steps we also get the same result for $v_{\beta(k-j)}$ and $V_{k-j}'$. 
     
    As $k = 2p$, the above shows that the existence of $T_\emph{max}$ implies the existence of a clique in $G$. Indeed, our induction result covers the whole graph, exactly $k-j$ edge tasks start at $v_{\beta(j) t}$, and for $i > j$ there can be at most one edge from $v_{\beta(j) t}$ to $v_{\beta(i) s}$ and to no other vertex in $V_i'$, and there are exactly $k-j$ different sets $V_i'$ with $i> j$. In particular, $\{v_{\beta(1)}, \dots, v_{\beta(k)}\}$ forms a multicolored clique, concluding the proof of the converse direction. 
    \qed
\end{proof}

 \section{The Missing Ingredient for Fixed-Parameter Tractability}
\label{sec:bounded_l}
Our results in Section~\ref{sec:unbounded_l} show that not even parameterizing by a combination of the treewidth \tw, maximum degree $\Delta$ and maximum capacity $c$ suffice to achieve fixed-parameter tractability for \UF. At this point, one may wonder whether it would be possible to achieve fixed-parameter tractability by strengthening the structural restrictions on the input graph---in particular by replacing treewidth with a different natural graph parameter such as 
treedepth~\cite{sparsity}, the vertex cover number~\cite{KorhonenP15,BhoreGMN20,BodlaenderGP23}, treecut-width~\cite{MarxWollan14,Wollan15,GanianKS22}, and the feedback edge number~\cite{ChaplickGFGRS22,KoanaFN23,BGR24}. Recalling that in Section~\ref{sec:intro} we already ruled out any sort of tractability for \UF\ under such parameterizations \emph{alone}, below we also rule out fixed-parameter tractability when they are combined with $c$:

\begin{itemize}[topsep=3pt]
\item For treedepth and the vertex cover number, it is futile to combine these parameters with $\Delta$ as this also bounds the size of the graph. On the other hand, \UF\ parameterized by $c$ plus either of these parameters remains \NP-hard due to the previously-mentioned \NP-hardness on $K_{3,n}$~\cite{FleszarMS18}.
\item When parameterizing by treecut-width or the feedback edge number in combination with $c$ and $\Delta$, we can rule out fixed-parameter tractability directly via the established \Whness\ on paths (Theorem~\ref{thm:Wctwd}). The same also holds for the recently introduced ``slim'' variants of tree-cut width~\cite{GanianK22,GanianHKOS22}.
\end{itemize}

But while it seems essentially impossible to achieve fixed-parameter tractability via placing stronger restrictions on the input graph, as the main contribution of this section we show that adding the bound $\ell$ on the length of admissible flow routes into the parameterization allows us to circumvent the \Whness\ of the problem. Indeed, our aim is to prove:

\begin{lemma}
\label{lem:fptruntime}
\UF\ can be solved in time $|\III| \cdot (c+1)^{\bigoh(\tw^2 \cdot \Delta^{2 \ell})}$.
\end{lemma}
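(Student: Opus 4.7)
The plan is to sharpen the dynamic-programming algorithm of Theorem~\ref{thm:XPctwd} by compressing the record so that tasks are tracked in aggregate by a bounded-size \emph{signature} rather than individually. The length bound $\ell$ is what makes this possible: it caps the complexity of the trace any single task can leave relative to the bag, so the number of distinct signatures becomes a function of the parameters only.

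First, invoke Proposition~\ref{fact:findtw} to obtain a nice tree decomposition $(D,\chi)$ of width $k=\bigoh(\tw)$ and proceed in leaf-to-root fashion. For a node $t$, I would define the \emph{signature} of a task to encode everything about that task that is relevant for combining its partial routing in $\past(t)$ with a completion in $\future(t)$: a pair of "sides" $(v_1,W_1;v_2,W_2)$, where each $v_i$ is either a vertex of $\chi(t)$ at which the task enters/exits the bag or the symbol $\bot$ (indicating an endpoint that is not yet committed in $\past(t)$), and each $W_i$ is a walk of length at most $\ell$ inside $\past(t)\cup\chi(t)$ describing the portion of the task's path already routed. Since each vertex has at most $\Delta$ neighbours, the number of walks of length at most $\ell$ starting from a prescribed vertex is at most $\bigoh(\Delta^{\ell})$, and there are $\bigoh(\tw)$ choices of the starting bag-vertex on each side, so the total number of signatures at any node is bounded by $\bigoh(\tw^{2}\cdot\Delta^{2\ell})$.

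The record at $t$ is then a function $\sigma\mapsto n_\sigma\in\{0,1,\dots,c\}$ assigning to each signature the number of tasks committed with that signature, paired with the accumulated partial profit; the per-signature count is bounded by $c$ because each present edge carries total demand at most $c$. Consequently, there are at most $(c+1)^{\bigoh(\tw^{2}\cdot\Delta^{2\ell})}$ distinct records. The records are computed by the standard case analysis over leaf, introduce, forget, and join nodes, reformulated in terms of counts: at a forget node for $u$, signatures whose walks can be extended through $u$ are updated (possibly closing out tasks whose endpoint is $u$), while at a join node the children's count-vectors are combined by summing compatible signatures, respecting the capacity of every edge in $\present(t)$. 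The inner combination at a join node is realised by a second-layer dynamic program that enumerates signatures in a canonical order and, for each, branches over the contribution from each child.

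The main obstacle, mirroring the situation in Theorem~\ref{thm:XPctwd}, is the join node: naively summing counts may correspond to a routing in which some task's two partial walks form a cycle or otherwise illegally double back through $\chi(t)$. The signature must therefore be rich enough to detect such clashes during merging, and an optimality/superseding relation analogous to the one of Theorem~\ref{thm:XPctwd} (now transported from individual tasks to signature counts) must be used to discard cycle-inducing or otherwise suboptimal combinations while preserving all legitimate ones. Granting this, the running time follows: the decomposition has $\bigoh(|\III|)$ nodes and the per-node work is dominated by iterating over pairs of child records and the inner DP, giving the claimed bound of $|\III|\cdot(c+1)^{\bigoh(\tw^{2}\cdot\Delta^{2\ell})}$.
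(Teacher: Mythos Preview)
Your proposal has a genuine gap, and it stems from the central move of replacing task identities by signature \emph{counts}. Two tasks can share the same routing signature $(v_1,W_1;v_2,W_2)$ yet have different demands and different profits. Knowing only $n_\sigma$ does not let you compute the total demand placed on an edge (so you cannot verify the capacity constraints), nor the total profit accrued (so you cannot compute $\Omega$). Folding the demand into the signature would inflate the signature space by a factor of~$c$ and still leave profit unaccounted for, since profits are not parameter-bounded. The join step has an additional problem: a task with one endpoint in $\past(t_1)$ and the other in $\past(t_2)$ contributes a count to some $\sigma_1$ in $R(t_1)$ and to some $\sigma_2$ in $R(t_2)$, but from marginal counts alone you cannot tell which $\sigma_1$-occurrence belongs to the same task as which $\sigma_2$-occurrence; ``summing compatible signatures'' does not recover this correspondence.

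The paper takes a different route that sidesteps all of this. It does \emph{not} track partial routes crossing the bag at all. Instead, it defers routing each task until the moment both of its endpoints lie in $\past(t)$; at that moment one endpoint has just left the bag, so the entire length-$\leq\ell$ path lies among the $\bigoh(\tw^2\cdot\Delta^{2\ell})$ ``visible'' edges within distance~$\ell$ of $\chi(t)$. The record is then simply the aggregate consumed capacity $\Lambda:E_{\vis}(t)\to[c]_0$ on visible edges together with the accumulated profit, and the newly active tasks at a forget or join node are processed one by one by a second-layer DP that branches over the at most $\Delta^\ell$ candidate paths for each. Because tasks are routed in full (never anonymised), their individual demand and profit are applied directly at that step, and no matching across children is required. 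That is the insight you are missing: the length bound~$\ell$ lets you avoid storing anything about tasks individually and instead store only per-edge usage on a bounded neighbourhood of the bag.
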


Before proceeding to the proof of Lemma~\ref{lem:fptruntime}, we observe that it immediately implies the two remaining algorithmic upper bounds depicted in Figure~\ref{fig:landscape}:

\fptctwdell*

\xptwdell*
   
\begin{proof}[Proof of Lemma~\ref{lem:fptruntime}] 
We begin by invoking Proposition~\ref{fact:findtw} to compute a nice tree-decomposition $(D,\chi)$ of $G$ of width $k$, where $k\leq 2\tw+1$.
Our proof utilizes a leaf-to-root dynamic programming algorithm working on $D$, which for each node computes partial solutions for the problem---however the records and the procedures of this dynamic program are entirely different from those used in the proof of Theorem~\ref{thm:XPctwd}.

For a node $t\in V(D)$, we define the set $\vis(t)$ of \emph{visible vertices} as the set containing all vertices of $G$ that have distance at most $\ell$ from $\chi(t)$. The set $E_{\vis}(t)$  of \emph{visible edges} then contains all edges of $G$ whose both endpoints are visible vertices. We note that for each node $t$, $|\vis(t)|\leq |\chi(t)| \cdot \Delta^\ell \leq \bigoh(\tw \cdot \Delta^\ell)$, and for visible edges we analogously obtain $|E_{\vis}(t)|\leq \bigoh(\tw^2 \cdot \Delta^{2\ell})$.

To better keep track of the different tasks at a node $t$, we will partition them into two categories: a task is \emph{active} if both of its endpoints lie in $\past(t)$, and \emph{future} if it has at least one endpoint in $\future(t)$.
A \emph{partial solution} at a node $t$ is a mapping $\zeta$ from a subset $Z$ of active tasks to paths of length at most $\ell$ in $G$ such that, for any edge $e$, its capacity is high enough to accomodate the flow routed through it in the partial solution, i.e., $\mu_\zeta(e):=\sum_{z\in Z, e\in \zeta(z)} d(z) \leq \gamma(e)$.
The \emph{profit} of $\zeta$ is then simply the sum of the profits of tasks in $Z$, i.e., $\sum_{z\in Z} w(z)$.

The syntax of our records is defined as follows: for each node $t$, $R(t)$ is a set of tuples of the form $(\Lambda, \Omega)$  where $\Lambda: E_{\vis}(t)\rightarrow [c]_0$ and $\Omega\in \N$.
Intuitively, $\Lambda$ captures the capacity that has already been consumed by routed active tasks on each visible edge, while $\Omega$ is the highest profit that can be achieved for a given $\Lambda$. Formally, we define the semantics of $R(t)$ by stipulating that $(\Lambda, \Omega)\in R(t)$ if and only if there exists a partial solution $\zeta$ at $t$ with the following properties:
\begin{enumerate}[topsep=4pt]
\item $\forall e\in E_{\vis}(t), \mu_\zeta(e)=\Lambda(e)$, and
\item the profit of $\zeta$ is exactly $\Omega$, and
\item every other partial solution $\zeta'$ such that $\forall e\in E_{\vis}(t), \mu_\zeta'(e)=\Lambda(e)$ achieves a profit that is upper-bounded by $\Omega$.
\end{enumerate}

For the root node $r$ of $D$, we observe that all tasks are active and partial solutions are thus routings of an arbitrary subset of all tasks, and that $E_{\vis}(r)=\emptyset$. By the definition of our semantics, we have only a single entry $(\emptyset, \Omega)\in R(r)$ where $\Omega$ is the highest profit that can be achieved by any flow routing in $\III$. Hence if we can compute $R(r)$, it suffices to compare this value with the target $\tau$ to solve $\III$. Moreover, computing $R(t)$ for a leaf node $t$ is trivial: all tasks are future and hence $R(t)=\{(\emptyset,0)\}$. Thus, it remains to show how the algorithm can compute the records in a leaf-to-root fashion when $t$ is an introduce, forget or join node.

\smallskip
\noindent \textbf{$t$ is an introduce node.}\quad
Let $t'$ be the child of $t$ and $u\in\chi(t)\setminus \chi(t')$ be the unique vertex introduced at $t$. Notice that while $E_{\vis}(t)\supseteq E_{\vis}(t')$, the set of active tasks is the same at $t$ and at $t'$ since $\past(t)=\past(t')$, and thus the partial solutions at $t$ are exactly the partial solutions at $t'$. In all of these partial solutions, because of the bound $\ell$ on the length of the paths associated with each task whose both endpoints lie in $\past(t')$, none of the edges in $E_{\vis}(t)\setminus E_{\vis}(t')$ can be used by any partial solution at $t$.

Thus, for each tuple $(\Lambda', \Omega)\in R(t')$, we construct a new tuple $(\Lambda,\Omega)\in R(t)$ by setting $\Lambda(e):=\Lambda'(e)$ for each edge in $E_{\vis}(t')$, and $\Lambda(e):=0$ for all each edge $e\in E_{\vis}(t)\setminus E_{\vis}(t')$. The correctness of this procedure follows directly from the observations in the previous paragraph.
 
\smallskip
\noindent \textbf{$t$ is a forget node.}\quad
Let $t'$ be the child of $t$ and $u\in\chi(t')\setminus \chi(t)$ be the unique vertex forgotten at $t$. Since $\past(t)=\past(t')\cup\{u\}$, a future task at $t'$ could become active in $t$ (in particular, this occurs for each task with one endpoint at $u$ and the other in $\past(t')$). Intuitively, one could apply a brute-force procedure for each such task to determine its impact on our records, but unfortunately here the number of such ``new'' tasks $t_1, \dots, t_x$ may be very large and hence it would be entirely infeasible to apply such a procedure for all of them simultaneously. To deal with this somewhat unusual obstacle, we apply a second layer of dynamic proramming in order to compute $R(t)$.

For each $i\in [x]_0$, we say that $\zeta$ is a \emph{partial solution of rank $i$ at $t$} if $\zeta$ is a partial solution at $t$ such that its preimage $Z$ satisfies the following condition: $\forall j>i, t_j\notin Z$. Observe that the set of partial solutions of rank $0$ ($x$) at $t$ is precisely the set of ``usual'' partial solutions at $t'$ (at $t$).
We now iteratively create a set $R_i(t)$ for each $i\in [x]_0$ with the same syntax semantics as the records of $R(t)$, but only restricted to partial solutions of rank $i$. Formally, $(\Lambda, \Omega)\in R_i(t)$ if and only if there exists a partial solution \textbf{of rank $i$} $\zeta$ at $t$ with the following properties:
\begin{enumerate}[topsep=4pt]
\item $\forall e\in E_{\vis}(t), \mu_\zeta(e)=\Lambda(e)$, and
\item the profit of $\zeta$ is exactly $\Omega$, and
\item every other partial solution \textbf{of rank $i$}  $\zeta'$ such that $\forall e\in E_{\vis}(t), \mu_\zeta'(e)=\Lambda(e)$ achieves a profit that is upper-bounded by $\Omega$. \label{propertythreeoptimality}
\end{enumerate}

Notice that $R_x(t)=R(t)$, meaning that it suffices to dynamically compute all of the tables $R_i(t)$ from $i=0$ up to $i=x$. Let us begin by describing the inductive step, i.e., the computation of all entries in $R_{i+1}(t)$ from the correctly computed entries in $R_i(t)$.
Towards this end, let $t_{i+1}=(u, v, d, w)$, and note that a partial solution $S$ of rank $i+1$ can always be decomposed in two parts: $S_i$ a partial solution of rank $i$, and $R_{i+1}$ is either empty or a path (i.e., routing) for $t_{i+1}$. Moreover, if there is no partial solution of rank $i+1$ superseding $S$ --i.e. inducing the same mapping on visible edges, but achieving a better profit-- there can be no partial solution $S_i'$ of rank $i$ that supersedes $S_i$, since otherwise we could combine $S'_i$ with $R_{i+1}$ to obtain a partial solution of rank $i+1$ superseding $S$. In other words, every partial solution $\zeta$ of rank $i+1$ satisfying Property~\ref{propertythreeoptimality} can be obtained by combining some partial solution of rank $i$ satisfying Property~\ref{propertythreeoptimality} and some routing of $t_{i+1}$.

As for enumerating all routings of $t_{i+1}$, we observe that since the task ends at $u$ and can be routeed through a path of length at most $\ell$, a partial solution may only route $t_{i+1}$ through visible edges. Because of this, we only need to check on these edges that the capacities of edges are respected when combining a partial solution of rank $i$ and a routing of $t_{i+1}$, and this information is indeed stored in the records at $R_{i}(t)$.

Thus, we can correctly compute $R_{i+1}(t)$ via the following procedure:
\begin{enumerate}[topsep=4pt]
\item Branch over all paths $P$ between $u$ and $v$ of length at most $\ell$ in $G$.
\begin{itemize}[topsep=3pt]
\item For each choice of $P$, branch over all feasible entries $(\Lambda, \Omega)$ in $R_{i}(t)$.
\begin{itemize}[topsep=3pt]
\item For each choice of $(\Lambda, \Omega)$, for each edge $e$ on the path $P$, check whether $\Lambda_e+d \leq \gamma(e)$; if this fails then proceed to the next branch, and otherwise set $\Lambda^*(e):=\Lambda(e)+d$. 
\item For each edge not on $P$, set $\Lambda^*(e):=\Lambda(e)$ and then set $\Omega^*=\Omega +w$ to construct a candidate tuple $(\Lambda^*,\Omega^*)$.
\item Compute an additional candidate tuple $(\Lambda^*=\Lambda, \Omega^*=\Omega)$.
\end{itemize}
\end{itemize}
\item For each $\Lambda^*$, insert the candidate tuple $(\Lambda^*,\Omega^*)$ (if one exists) achieving minimum $\Omega^*$ into $R_{i+1}(t)$.
\end{enumerate}

It remains to describe how to compute $R_0(t)$; here, we only need to observe that the set of visible edges at $t$ is a subset of the visible edges at $t'$. Thus for each entry $(\Lambda', \Omega')\in R(t')$, we can construct $\Lambda$ as the restriction of $\Lambda'$ to $E_{vis}(t)$ and obtain a candidate tuple $(\Lambda, \Omega')$ corresponding to the same partial solution. After doing this for all entries at $t'$, we only add those candidate tuples into $R_0(t)$ which are not superseded by another tuple with the same $\Lambda$ but lower $\Omega$. The correctness of this procedure follows directly by the direct correspondence between partial solutions at $t'$ and those of rank $0$ at $t$.

\smallskip
\noindent
\textbf{$t$ is a join node.}\quad
Let $t_1$ and $t_2$ be the two children of $t$. Observe that no task could have been simultaneously active in both children of $t$, and that every task that was active in at least one child of $t$ will remain active in $t$; however, we may now have active tasks at $t$ that were not active in either of the children (simply due to having an endpoint in each of $\past(t_1)$ and $\past(t_2)$).
Let us once again denote these ``newly active'' tasks $t_1, \dots, t_x$, ordered arbitrarily.   To deal with these tasks, we will make use of the same definition of partial solutions of rank $i$ at $t$ and the same notion of $R_i(t)$ as when dealing with the previous case (i.e., with forget nodes). 

Moreover, the computation of the set $R_{i+1}(t)$ from $R_{i}(t)$ for each $i\in [x]_0$ is carried out by the same procedure as for the forget nodes as well, and correctness of each set $R_{i+1}(t)$ follows from the correctness of $R_{i}(t)$ by the exact same argument as above. 
The only difference lies in the computation of $R_0(t)$.

Here, we remark that every partial solution $S$ of rank $0$ at $t$ can be decomposed into a partial solution $S_1$ at $t_1$ and a partial solution $S_2$ at $t_2$, since the active tasks are exactly those which were active at precisely one of these two child nodes. The only part of the graph where $S_1$ and $S_2$ can overlap--i.e., can both use the same edges---is within $E_{\vis}(t)=E_{\vis}(t_1)=E_{\vis}(t_2)$. 
Crucially, if any of the aforementioned two partial solutions at the children, say $S_1$, were to be superseded by another partial solution $S_1'$ at $t_1$ in the sense of both requiring the same edge capacities in $E_{\vis}(t_1)$ but $S_1$ yielding a strictly smaller profit $\Omega_1$, then one could combine $S_1'$ with $S_2$ to obtain a partial solution of rank $0$ at $t$ which supersedes $S$. 

Thus, every entry in $R_0(t)$ can be correctly obtained via a combination of entries in $R(t_1)$ and $R(t_2)$. In particular, we construct the entries of $R_0(t)$ as follows:
for each entry $(\Lambda_1, \Omega_1)$ at $t_1$ and $(\Lambda_2, \Omega_2)$ at $t_2$, we define a candidate entry $(\Lambda,\Omega)$ as follows: $\forall e \in E_{visible}, \Lambda(e)=\Lambda_1(e)+\Lambda_2(e)$ and $\Omega=\Omega_1+\Omega_2$. We then check, for each candidate entry constructed in the above way, whether it is superseded by another candidate entry of the same $\Lambda$ but smaller $\Omega$; we discard all superseded entries and add all those that remain to $R_0(t)$

Note that the check at the end of the previous paragraph is necessary: two distinct combinations of $\Lambda_1$, $\Lambda_2$ and $\Lambda'_1$, $\Lambda'_2$ may lead to the same choice of $\Lambda$. On the other hand, the correctness of our computation of $R_0(t)$ that only considers entries in 

\smallskip
\noindent \textbf{Running time.}\quad
The number of entries in $R(t)$ is upper-bounded by $(c+1)^{|E_{\vis}(t)|}\leq(c+1)^{\bigoh(\tw^2 \cdot \Delta^{2\ell})}$, and the same bound also applies to each of the sets $R_i(t)$.
Moreover, the number of paths of length at most $\ell$ between two vertices is upper bounded by $\Delta^\ell$.
Thus, it takes time at most $(c+1)^{\bigoh(\tw^2 \cdot \Delta^{2 \ell})}\cdot \Delta^\ell= (c+1)^{\bigoh(\tw^2 \cdot \Delta^{2 \ell})}$ to compute the entries of each of our record sets.
Since each task only switches from being future to active precisely once, the number of times the entries of a new record set need to be computed is upper-bounded by the number of nodes of $D$ plus the number of tasks, which can be altogether upper-bounded by $\bigoh(|\III|)$.
 Overall, we obtain that the running time of the dynamic programming algorithm presented above can be upper-bounded by: $\bigoh(|\III|) \cdot (c+1)^{\bigoh(\tw^2 \cdot \Delta^{2 \ell})}$.
\qed
\end{proof}

As our final result, we complement Theorem~\ref{thm:xptwdell} by ruling out fixed-parameter tractability under the same parameterization.

\Wtwdell*
   
\begin{proof}
The result follows from a straightforward reduction from the \UBP  problem: given a set of $k$ bins each of capacity $m$ and a set of $p$ items each of size $s_i$ ($i\in [p]$) where $h_i$ is encoded in unary, determine whether it is possible to assign each of the items into a bin so that the total size of items assigned to each of the bins is precisely $m$. \UBP\ is known to be \Wh\ when parameterized by $k$~\cite{JansenKMS10}. \\

The reduction constructs an instance $\III$ of \UF\ as follows. We first construct a pair of pairwise non-adjacent vertices $s$ and $t$. Then for each item $i$ in the initial \UBP instance, we create the task $(s,t,{h_i},1)$.
Finally, we construct precisely $k$ additional vertices (one for each bin) and connect each such vertex to both $s$ and $t$ via edges of capacity $m$. \\

The equivalence between $\III$ and the original instance of \UBP. is immediate, since there is a direct correspondence between each $s$-$t$ path in the former and a bin in the latter. To complete the proof, we observe that the constructed graph has maximum degree $k$, treewidth $3$ and every path from $s$ to $t$ is of length $2$.\qed
\end{proof}

\section{Concluding Remarks}
While our results highlight that \UF\ remains highly intractable even in severely restricted settings, they also provide some of the first algorithmic results for the problem on general graphs. Combined, the results also paint a surprisingly rigorous and comprehensive picture of the complexity landscape of \UF. For instance, in addition to the discussion at the beginning of Section~\ref{sec:unbounded_l}, one may note that the \XP-tractability arising from Theorem~\ref{thm:XPctwd} cannot be achieved without a bound on the maximum degree of the input graph even when restricted to the class of trees, since the problem is known to remain \NP-hard even on trees of edge capacity at most $2$~\cite{GargVY97}.

A natural question left for future work would be to study the complexity of \UF\ under other parameterizations that are not ``graph-structural'' in nature, such as when parameterizing by the number of routed tasks in the solution or the target profit $\tau$; previous work has considered this question primarily on specific graph classes such as trees~\cite{Martinez-MunozW21}.

    \bibliographystyle{plainurl}
\bibliography{main.bib}

\end{document}